\newcommand{\restateref}[1]{\IfAppendix{\hyperref[#1]{$\star$}}{\hyperref[#1*]{$\star$}}}
\Crefname{theorem}{Thm.}{Thms.}
\Crefname{figure}{Fig.}{Figs.}
\title{Planar Stories of Graph Drawings: Algorithms and Experiments} 
\author{Carla Binucci}{University of Perugia, Italy }{carla.binucci@unipg.it}{https://orcid.org/0000-0002-5320-9110}{}
\author{Sabine Cornelsen}{University of Konstanz, Germany}{sabine.cornelsen@uni-konstanz.de}{https://orcid.org/0000-0002-1688-394X}{}
\author{Walter Didimo}{University of Perugia, Italy }{walter.didimo@unipg.it}{https://orcid.org/0000-0002-4379-6059}{}
\author{Seok-Hee Hong}{University of Sydney, Australia }{seokhee.hong@sydney.edu.au}{https://orcid.org/0000-0003-1698-3868}{}
\author{Eleni Katsanou}{National Technical University of Athens, Greece}{ekatsanou@mail.ntua.gr}{https://orcid.org/0000-0002-1001-1411}{}
\author{Maurizio Patrignani}{Roma Tre University, Italy}{maurizio.patrignani@uniroma3.it}{https://orcid.org/0000-0001-9806-7411}{}
\author{Antonios Symvonis}{National Technical University of Athens, Greece}{symvonis@mail.ntua.gr}{
https://orcid.org/0000-0002-0280-741X}{}
\author{Samuel Wolf}{University of W\"urzburg, Germany }{samuel.wolf@uni-wuerzburg.de}{https://orcid.org/0009-0009-7098-6147}{}
\authorrunning{Binucci, Cornelsen, Didimo, Hong, Katsanou, Patrignani, Symvonis, Wolf}
\keywords{Graph Drawing, Dynamic Graphs, Graph Stories, Heuristics, ILP} 
\crefname{property}{Property}{Properties}
\begin{document}

\maketitle

\begin{abstract}
We address the problem of computing a dynamic visualization of a geometric graph~$G$ as a sequence of frames. Each frame shows only a portion of the graph but their union covers $G$ entirely. The two main requirements of our dynamic visualization are: $(i)$ guaranteeing drawing stability, so to preserve the user's mental map; $(ii)$ keeping the visual complexity of each frame low. 
To satisfy the first requirement, we never change the position of the vertices. Regarding the second requirement, we avoid edge crossings in each frame. More precisely, in the first frame we visualize a suitable subset of non-crossing edges; in each subsequent frame, exactly one new edge enters the visualization and all the edges that cross with it are deleted. We call such a sequence of frames a \emph{planar story} of~$G$.
Our goal is to find a planar story whose minimum number of edges contemporarily displayed is maximized (i.e., a planar story that maximizes the minimum frame size). Besides studying our model from a theoretical point of view, we also design and experimentally compare different algorithms,
both exact techniques and heuristics. These algorithms provide an array of alternative trade-offs between efficiency and effectiveness, also depending on the structure of the input graph.   
\end{abstract}

\section{Introduction}\label{se:intro}
Conveying the structure of a graph in a single visualization is the goal of many graph drawing algorithms. Among these, force-directed algorithms are the most popular and widely used in practice~\cite{DBLP:reference/crc/Kobourov13}. They produce layouts in which vertices are represented as points in the plane and edges as straight-line segments. However, when a graph is complex or locally dense, the layout may feature numerous edge crossings, which significantly reduce its readability, as also witnessed by several human cognitive experiments in graph drawing~\cite{DBLP:conf/gd/Purchase97,DBLP:journals/iwc/Purchase00,DBLP:journals/ese/PurchaseCA02,DBLP:journals/tvcg/PurchasePP12}.

This scenario naturally motivates strategies that either attempt to simplify the (single) graph visualization, for instance through graph sampling~\cite{Eades,BCproxy,proxy,r-05,DBLP:journals/tvcg/WuCASQC17,zxyq-13} or edge bundling~\cite{DBLP:conf/gd/ArchambaultLNPT24,DBLP:journals/cgf/HoltenW09,Nguyen}, or that attempt to ``distribute'' the nodes and edges of the graph across multiple visualizations, i.e., a sequence of frames, each showing only a portion of the graph (see, e.g.,~\cite{DBLP:journals/jcss/BinucciGLLMNS24,DBLP:journals/jgaa/BorrazzoLBFP20,DBLP:journals/jgaa/BattistaDGGOPT23,DBLP:journals/vlc/GiacomoDLMT14}). While the first type of strategy focuses on a static visualization that may cause ambiguity or loss of information, computing a sequence of visualization frames allows for a dynamic exploration of the entire graph, without altering its structure. However, it  should guarantee \emph{drawing stability} (i.e., the layout of the portion of the graph shared by two consecutive frames should remain unchanged) and each frame should have \emph{low visual complexity}, for example by guaranteeing planarity or few edges crossings.

\subparagraph{Contribution.} In this paper we address the second type of strategy mentioned above. We introduce a new model for the dynamic visualization of a static graph as a sequence of frames. This model assumes that the input is a geometric graph $G$ (i.e., a graph drawn in the plane with straight-line edges, for example by some force-directed algorithm) and aims to generate a sequence of visualization frames such that: $(i)$ the vertices of~$G$ are all present in any frame, always at their initial input coordinates; $(ii)$ the initial frame contains a  subset of edges that yields a planar subgraph; $(iii)$ in each subsequent frame, exactly one new edge~$e$ enters the visualization and the minimum subset of edges of the current frame that together with~$e$ violate planarity disappear; $(iv)$ when an edge leaves the visualization, it no longer appears in the future; $(v)$ the union of all frames contains the whole edge set of~$G$. We call such a sequence of frames a \emph{planar story} of~$G$ (see \cref{se:preliminaries} for a formal definition).

A natural objective function when computing a planar story is that the amount of edges in any frame is not too small. More precisely, we are interested in planar stories whose minimum number of edges contemporarily displayed (across the entire sequence of frames) is maximized. We call such a story \emph{min-frame optimal}. 
Besides studying our model from a theoretical point of view, we design and experimentally compare different algorithms (both exact techniques and heuristics). Our main results are as follows:

\begin{itemize}
\item We show in \cref{se:heuristic} that a min-frame optimal planar story of a geometric graph $G$ can be efficiently computed when $G$ is 2-plane (i.e., each edge is crossed at most twice), although   
the problem is NP-hard in the general case (\cref{se:complexity}). 

\item We describe several heuristics based on a common greedy framework; all of them work for general graphs (\cref{se:heuristic}). They differ in the strategy for computing the initial and the last frames, and in the criteria used to select the edge that enters the visualization in each subsequent frame. 
We also describe an Integer Linear Program (ILP) that solves the problem optimally (\cref{se:ilp}).

\item We discuss the results of an extensive experimental analysis that compares our heuristics and the exact algorithm. The results highlight algorithmic trade-offs between efficiency and effectiveness, and show how some of our heuristics are able to achieve the optimum in many cases in a short runtime. See \cref{se:experiments}.   
\end{itemize}

The paper concludes by discussing some future research directions (\cref{se:conclusions}). 
Due to space limitations some proofs and technical details are moved to the long version of the paper appendix.

\section{Related Work}\label{related-work}

The research in this paper is inspired by a recent topic in graph drawing named \emph{graph stories}. A graph story corresponds to a temporal sequence of frames, each displaying only part of the graph and whose union covers all nodes and edges of the graph. The term ``graph story'' was introduced by Borrazzo, Da Lozzo, Di Battista,  Frati, and Patrignani~\cite{DBLP:journals/jgaa/BorrazzoLBFP20}. 
Different from our scenario, in their model no drawing of the graph is provided in advance. The vertices enter the visualization one at a time and persist in the visualization for a fixed amount of time (i.e., for a given subsequence of frames). 
When a vertex enters the visualization its coordinates have to be computed.
In any frame, the user sees only the graph induced by the displayed vertices; the drawing in each frame must be straight-line and planar, and the layout of the graph portion shared by two consecutive frames cannot change. The authors in~\cite{DBLP:journals/jgaa/BorrazzoLBFP20} provide bounds on the area requirement of graph stories for paths and trees, and exhibit planar graphs that do not admit a graph story within their rules. The model in~\cite{DBLP:journals/jgaa/BorrazzoLBFP20}  has been further studied in~\cite{DBLP:journals/jgaa/BattistaDGGOPT23,DBLP:conf/gd/BattistaDGGOPT22}, by allowing edges to be Jordan curves rather than straight-line segments. Also, several authors proposed a different setting in which each vertex persists in the visualization until all its neighbors have been displayed~\cite{DBLP:conf/gd/BinucciGLLMNS22,DBLP:journals/jcss/BinucciGLLMNS24,DBLP:conf/sofsem/FialaFLWZ24}. 
In all the aforementioned papers, the main focus is on exploring the time complexity and theoretical questions, often restricting to very specific graph families. Conversely, our model works for general graphs and, besides providing a solid theoretical basis, our goal is to derive practical implementations with different trade-offs between effectiveness and efficiency.    

Another problem related to our research has been studied in~\cite{DBLP:conf/gd/GiacomoDLMT13,DBLP:journals/vlc/GiacomoDLMT14}. Similarly to our setting, it assumes that the input is a straight-line drawing of a graph; the objective is to partition the edge set into the minimum number of frames, such that the drawing in each frame has some desired property (e.g., planarity). However, unlike our model, the visualization frames do not form a temporal sequence and each edge appears in just one~frame.  

Finally, we remark that representing a graph across a sequence of frames (using small multiples or animation approaches) is strongly related to the  problem of visually conveying dynamic graphs (see~\cite{DBLP:journals/cgf/BeckBDW17} for a survey on this topic). However, a dynamic graph evolves over time, and the ordered sequence of elements that appear or disappear during the graph evolution is part of the input for the drawing algorithm. In our setting, the graph is static and the algorithm can decide the temporal sequence of edges for dynamically visualizing it.

\section{Basic Definitions and Preliminary Considerations}\label{se:preliminaries}

Given a graph $G$, we denote by $V(G)$ and $E(G)$ the set of vertices and the set of edges of~$G$, respectively. For a set $V' \subseteq V(G)$, let $N(V')$ be the set of neighbors of vertices in~$V'$ and let~$G-V'$ be the subgraph of~$G$ induced by~$V(G) \setminus V'$.
%
%
Let $G$ be a \emph{geometric graph}, that is, each vertex $v \in V(G)$ is a point in the plane and each edge $e=uv \in E(G)$
is the straight-line segment connecting $u$ and $v$. 
A \emph{planar frame} $F$ of $G$ is any subgraph of $G$ consisting of all the vertices of $G$ and a subset of edges of $G$ that do not cross. A \emph{planar story} of $G$ is a sequence $\sigma = \langle F_1, F_2, \dots, F_\tau\rangle$ such that: 

\begin{enumerate}
\item Each $F_i$ is a planar frame of $G$, with $i \in \{1, \dots, \tau\}$.
\item Each edge $e \in E(G)$ appears in a non-empty  subsequence of consecutive planar frames.
\item For each $i \in \{2, \dots, \tau\}$, $F_i$ contains exactly one edge $e$ of $G$ that is not in the union of all frames $F_j$, with 
$j < i$, 
and exactly those edges of $F_{i-1}$ that do not cross $e$.
\end{enumerate}

Let $\Sigma(G)$ be the set of all planar stories of $G$. For a planar story $\sigma = \langle F_1, F_2, \dots, F_\tau\rangle \in \Sigma(G)$, we call $F_1$ and $F_\tau$ the \emph{initial frame} and the \emph{last frame} of $\sigma$, respectively, let $m_i$ be the \emph{size} of $F_i$, that is, the number of edges in~$F_i$. Let $\mu(\sigma) = \min \{m_1, \dots, m_\tau\}$, and let $\mu(G)$ be the maximum $\mu(\sigma)$ over all planar stories $\sigma$ of $G$, i.e., $\mu(G) = \max \{\mu(\sigma) \; | \; \sigma \in \Sigma(G)\}$. If $\mu(\sigma)=\mu(G)$ then $\sigma$ is said to be \emph{min-frame optimal}. 
%
We call \textsc{MaxMinFramePlanarStory($G$)} the problem of computing a min-frame optimal planar story of~$G$.

\begin{remark}\label{re:crossing-edges}
Since all the crossing-free edges of $G$ appear in every frame of any planar story of $G$, we can safely disregard them when solving \textsc{MaxMinFramePlanarStory($G$)}. Hence, without loss of generality, we will assume that $G$ does not contain crossing-free edges.
\end{remark}
%

\Cref{fig:planar-story-a} shows a geometric graph $G$ and a planar story $\sigma$ of $G$, consisting of six frames and where $\mu(\sigma) = 3$. In the figure, the red edges are those of the initial frame, the blue edges are those of the final frame, and the black edges are the remaining edges. 
\begin{figure}
    \centering
    \includegraphics[page=4]{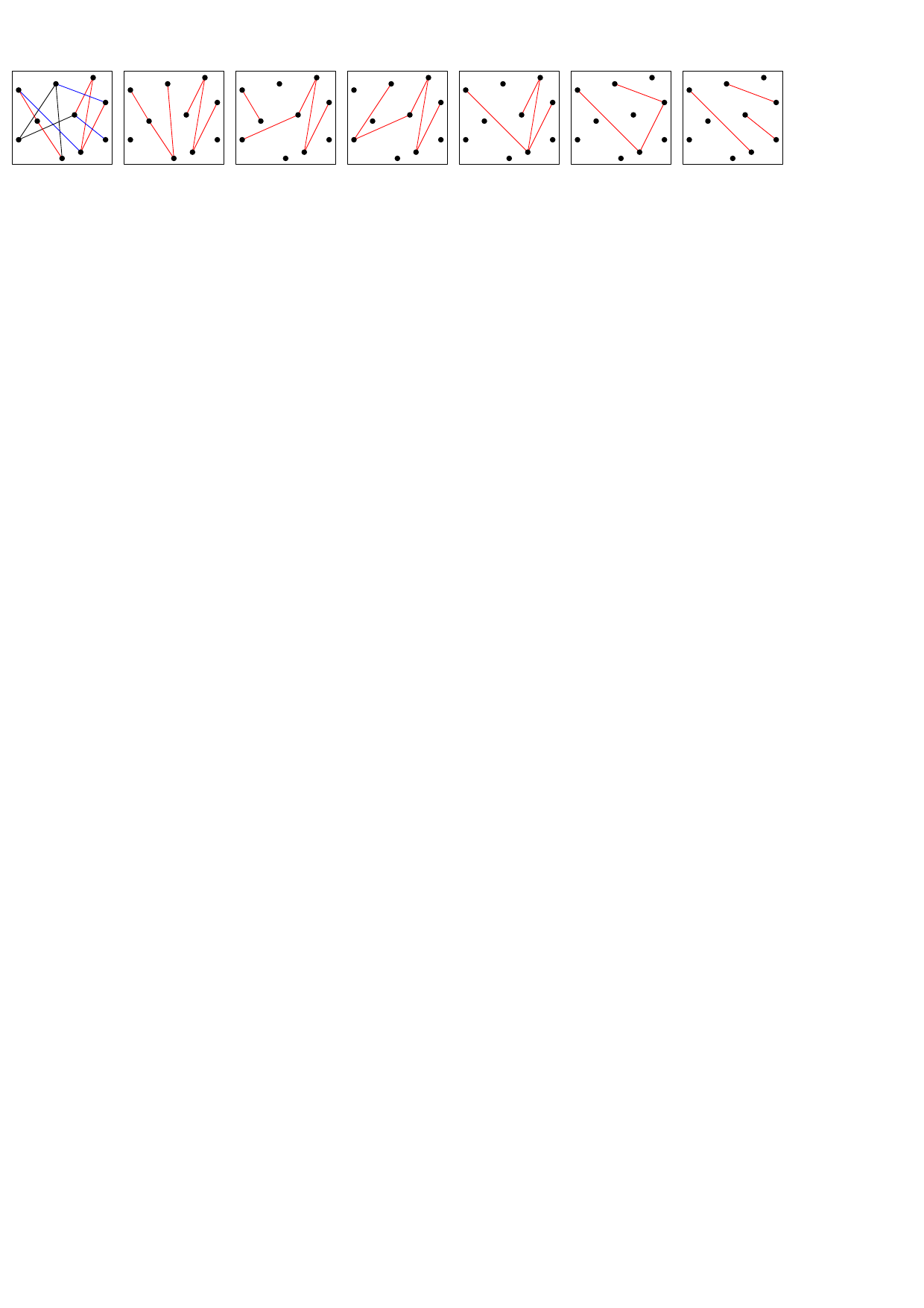}
    \caption{A planar story consisting of 6 frames. The top-leftmost figure shows the input geometric graph $G$; red edges and blue edges are those in the first frame and in the last frame, respectively. The black edges are the remaining edges.}
    \label{fig:planar-story-a}
\end{figure}

\begin{proposition}\label{pr:disjoint}
In any planar story $\sigma = \langle F_1, \dots, F_\tau \rangle$, $F_1$ and $F_\tau$ are edge disjoint. 
\end{proposition}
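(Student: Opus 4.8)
The plan is to show that no edge can belong to both $F_1$ and $F_\tau$ by exploiting the monotone ``enter once, leave once, never return'' structure imposed by conditions~2 and~3 of the definition of a planar story. First I would observe that condition~2 forces each edge to occupy a \emph{contiguous} block of frames: if $e$ appears in $F_i$ and $F_k$ with $i<k$, it appears in every $F_j$ with $i \le j \le k$. Combined with condition~3, this means an edge that has once left the visualization can never come back, since the only edge \emph{added} when passing from $F_{j-1}$ to $F_j$ is a brand-new edge not present in any earlier frame.

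Next I would argue that $F_1$ cannot equal $F_\tau$ and, more strongly, that $F_1$ and $F_\tau$ share no edge, by a simple counting/exhaustion argument on the edges. By \cref{re:crossing-edges} we may assume $G$ has no crossing-free edge, so every edge of $G$ is crossed by at least one other edge. Take any edge $e \in F_1$ and let $f$ be an edge of $G$ that crosses $e$. Since the union of all frames is $E(G)$ (a consequence of condition~2 applied to the whole story), $f$ appears in some frame $F_i$; let $i$ be the first index at which $f$ appears. By condition~3, at step $i$ all edges of $F_{i-1}$ crossing $f$ are deleted, so if $e \in F_{i-1}$ then $e \notin F_i$, and by the contiguity/no-return property established above, $e \notin F_j$ for all $j \ge i$; in particular $e \notin F_\tau$. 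It remains to handle the case $e \notin F_{i-1}$: but $e \in F_1$ and $e$ occupies a contiguous block of frames starting at frame~$1$, so if $e \notin F_{i-1}$ then $e$ has already left before frame $i-1$, and again by the no-return property $e \notin F_\tau$. Either way $e \notin F_\tau$, so $F_1 \cap F_\tau = \emptyset$.

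The main obstacle is making the ``no-return'' claim fully rigorous: one has to carefully combine conditions~2 and~3 to rule out that an edge which appears in $F_1$, then disappears, could be re-introduced later — condition~3 says the unique new edge at each step was never in any previous frame, which is exactly what closes this gap, but the interaction with condition~2 (contiguity of appearances) needs to be stated cleanly. A secondary subtlety is ensuring that every edge of $G$, and in particular a crossing partner $f$ of $e$, really does appear in the story; this follows because conditions~2 and~3 together imply the frames are monotonically exhaustive, but it is worth spelling out. Once these structural facts are in place, the disjointness of $F_1$ and $F_\tau$ follows immediately from the argument above.
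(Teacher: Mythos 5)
Your proof is correct and follows essentially the same route as the paper's: pick $e \in F_1$, use \cref{re:crossing-edges} to obtain a crossing partner $f$, let $F_i$ be the first frame containing $f$, and conclude from conditions~2 and~3 that $e$ is absent from every frame from $F_i$ onward, hence from $F_\tau$. The only point worth stating explicitly is that $i \geq 2$ (so that $F_{i-1}$ exists), which is immediate because $F_1$ is a planar frame containing $e$ and therefore cannot contain $f$.
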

\begin{proof}
Let $e$ be an edge in $F_1$. By \cref{re:crossing-edges}, $e$ is crossed by some edge $e'$ of the graph. Let $F_i$ be the first frame of $\sigma$ that contains $e'$, with $i \geq 2$.
By definition of planar story, $e$ does not occur in any frame $F_j$ with $j \geq i$. In particular, $e$ is not contained in $F_\tau$.
\end{proof}

Given a planar frame $F$ of $G$, let $\mu_F(G)$ be the maximum $\mu(\sigma)$ over all planar stories $\sigma$ of $G$ with initial frame $F$.
\cref{pr:disjoint} implies the following.

\begin{corollary}\label{cor:bound}
	Let $F$ be a planar frame of $G$, and let $F'$ be a maximum planar subgraph of $G - E(F)$. Then $\mu_F(G) \leq \min\{|E(F)|,|E(F')|\}$. Therefore, $\mu(G) \leq |E(G)|/2$.
\end{corollary}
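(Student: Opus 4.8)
The plan is to bound $\mu(\sigma)$ for an \emph{arbitrary} planar story $\sigma = \langle F_1, \dots, F_\tau\rangle$ with $F_1 = F$, and then take the maximum over all such $\sigma$. The two quantities in the minimum come from looking, respectively, at the initial frame and at the last frame of $\sigma$.

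First, the bound $\mu_F(G) \le |E(F)|$ is essentially by definition: for every planar story $\sigma$ with initial frame $F$ we have $\mu(\sigma) = \min\{m_1,\dots,m_\tau\} \le m_1 = |E(F_1)| = |E(F)|$, so the maximum over such $\sigma$ is at most $|E(F)|$ as well. Second, to get $\mu_F(G) \le |E(F')|$ I would invoke \cref{pr:disjoint}: the last frame $F_\tau$ is edge-disjoint from $F_1 = F$, so $E(F_\tau) \subseteq E(G)\setminus E(F)$. Since $F_\tau$ is a planar frame of $G$ on the full vertex set, it is in particular a planar subgraph of $G - E(F)$; as $F'$ is a \emph{maximum} planar subgraph of $G - E(F)$, this gives $m_\tau = |E(F_\tau)| \le |E(F')|$, hence $\mu(\sigma) \le m_\tau \le |E(F')|$. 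Combining the two bounds and taking the maximum over all planar stories with initial frame $F$ yields $\mu_F(G) \le \min\{|E(F)|, |E(F')|\}$.

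For the global statement, recall that $\mu(G)$ equals the maximum of $\mu_F(G)$ over all planar frames $F$ of $G$. Fix any such $F$. Since $F'$ is a subgraph of $G - E(F)$, the edge sets $E(F)$ and $E(F')$ are disjoint subsets of $E(G)$, so $|E(F)| + |E(F')| \le |E(G)|$, and therefore $\min\{|E(F)|, |E(F')|\} \le \tfrac12(|E(F)| + |E(F')|) \le |E(G)|/2$. Taking the maximum over $F$ gives $\mu(G) \le |E(G)|/2$.

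There is no real obstacle here; the only points requiring a line of care are (i) that the stated inequalities must be verified for \emph{every} story with initial frame $F$, not just an optimal one, since $\mu_F(G)$ is itself a maximum, and (ii) that disjointness of $E(F)$ and $E(F')$ is automatic from $F'$ being a subgraph of $G - E(F)$, which is what lets the $\tfrac12(a+b)$ averaging argument go through. Note also that \cref{re:crossing-edges} is not needed for this corollary, only \cref{pr:disjoint}.
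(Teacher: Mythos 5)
Your proof is correct and follows essentially the same route the paper intends: \cref{pr:disjoint} gives edge-disjointness of the first and last frames, yielding the two bounds $\mu(\sigma)\le m_1=|E(F)|$ and $\mu(\sigma)\le m_\tau\le|E(F')|$, and the averaging step gives $\mu(G)\le|E(G)|/2$. One minor caveat: \cref{re:crossing-edges} is not bypassed so much as absorbed, since \cref{pr:disjoint} itself holds only under the standing assumption that $G$ has no crossing-free edges.
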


\subparagraph{Planar Graph Stories and Crossing Graph.}
The \emph{crossing graph} $X$ of a geometric graph $G$ is defined as follows:
$(i)$ the vertex set $V(X)$ corresponds to the edge set $E(G)$;
$(ii)$ there is an edge $uv \in E(X)$ if and only if the edges corresponding to $u$ and $v$ in $E(G)$~cross~each~other. 
%
The crossing-free edges of $G$ correspond to isolated vertices of $X$; by \cref{re:crossing-edges}, we assume that $X$ has no isolated vertices.
Note that a graph is the 
crossing graph of a geometric graph if and only if it is a \emph{segment intersection graph} \cite{kratochvilM:jct94}. This includes cliques, but also planar graphs \cite{chalopinG:stoc09,goncalvesIP:soda18}. 
The crossing graph of a geometric graph with $m$ edges and $\chi$ crossings can be computed in $\mathcal O(m\log m + \chi)$~time with the algorithm of Balaban \cite{balaban:socg95} for segment intersection.

\begin{figure}
    \centering
    \includegraphics[page=3]{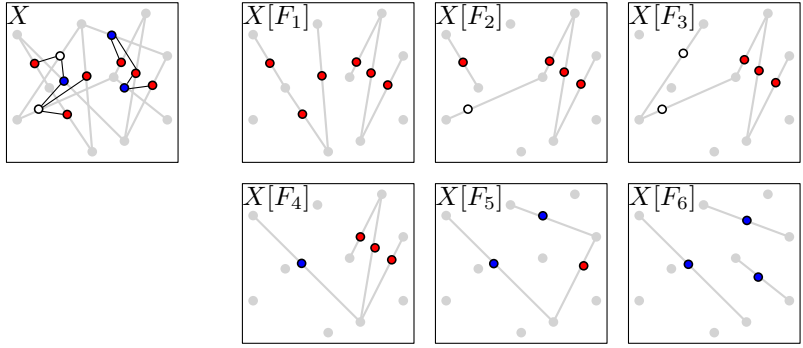}
    \caption{The crossing graph $X$ of the graph $G$ in \cref{fig:planar-story-a} (top-left) and the planar story in \cref{fig:planar-story-a} from the point of view of the crossing
graph $X$ of $G$.}
    \label{fig:planar-story-b}
\end{figure}

If $\sigma=\langle F_1, \dots, F_\tau\rangle$ is a planar story of $G$, each frame $F_i$ ($i \in \{1,\dots,\tau\}$) corresponds to an \emph{independent set} $X[F_i]$ of $X$,
i.e., no two vertices of $X[F_i]$ are adjacent. 
We denote by $\sigma_X = \langle X[F_1], \dots, X[F_\tau]\rangle$ the sequence that describes the planar story $\sigma$ from the point of view of $X$. See \Cref{fig:planar-story-b} for an example.  We will also refer to $\sigma_X$ as a planar story and to the independent sets $X[F_i]$ as (planar) frames. If not stated otherwise, we use red for the initial frame and blue for the final frame in our figures.
A \emph{maximum pair of independent sets} is a pair $(I_1,I_2)$ of two disjoint independent sets such that $\min\{|I_1|,|I_2|\}$ is maximum. 
A maximum pair $(I_1,I_2)$ of independent sets is \emph{Pareto optimal} if also $\max\{|I_1|,|I_2|\}$ is maximum. 
As a further corollary of \cref{pr:disjoint}, we obtain the~following. 

\begin{corollary}\label{cor:pair}
    Let $G$ be a geometric graph, $X$ be the crossing graph of $G$, and $(I_1,I_2)$ be a \emph{maximum pair of independent sets} of $X$. Then $\mu(G) \leq \min\{|I_1|,|I_2|\}$. 
\end{corollary}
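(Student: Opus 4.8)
The plan is to combine \cref{pr:disjoint} with the cardinality-preserving correspondence between planar frames of $G$ and independent sets of the crossing graph $X$, and then to invoke the extremality of $(I_1,I_2)$ among all disjoint pairs of independent sets of $X$.

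First I would fix a min-frame optimal planar story $\sigma = \langle F_1, \dots, F_\tau \rangle$ of $G$, so that $\mu(\sigma) = \mu(G)$; such a $\sigma$ exists because $\Sigma(G)$ is finite and nonempty (by \cref{re:crossing-edges} every edge of $G$ is crossed, so even a single-edge frame extends to a valid story, and $\tau$ is bounded since each frame $F_i$ with $i \ge 2$ introduces a previously unseen edge). By \cref{pr:disjoint}, the initial frame $F_1$ and the last frame $F_\tau$ are edge disjoint, hence the corresponding independent sets $X[F_1]$ and $X[F_\tau]$ of $X$ are disjoint, with $|X[F_1]| = m_1$ and $|X[F_\tau]| = m_\tau$ directly from the definition of $X[F_i]$.

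Next I would chain two inequalities. On one hand, $\mu(G) = \mu(\sigma) = \min\{m_1, \dots, m_\tau\} \le \min\{m_1, m_\tau\} = \min\{|X[F_1]|, |X[F_\tau]|\}$. On the other hand, $(X[F_1], X[F_\tau])$ is a pair of disjoint independent sets of $X$, and $(I_1,I_2)$ is chosen so that $\min\{|I_1|,|I_2|\}$ is maximum over all such pairs; therefore $\min\{|X[F_1]|, |X[F_\tau]|\} \le \min\{|I_1|, |I_2|\}$. Combining the two yields $\mu(G) \le \min\{|I_1|, |I_2|\}$, as claimed.

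There is essentially no real obstacle here: the argument is a one-line consequence of \cref{pr:disjoint} once the frame-to-independent-set dictionary is in place. The only points worth stating carefully are the existence of a min-frame optimal story (so that $\mu(G)$ is realized by an actual $\sigma$) and the fact that the map $F_i \mapsto X[F_i]$ preserves edge counts and disjointness; both are immediate.
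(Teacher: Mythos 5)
Your proof is correct and follows exactly the paper's intended route: the corollary is stated there as an immediate consequence of \cref{pr:disjoint}, since the initial and final frames give disjoint independent sets of $X$ whose sizes bound $\mu(\sigma)$, and the maximality of $(I_1,I_2)$ finishes the argument. Your extra care about the existence of a min-frame optimal story and the frame-to-independent-set correspondence is fine but not needed beyond what the paper already assumes.
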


We observe that, in order to get a min-frame optimal story of a geometric graph $G$, it might be necessary to choose as the initial frame a planar subgraph of $G$ that is not a maximal planar subgraph. 
In terms of the crossing graph, this means that it might be necessary to start with an independent set that is not maximal. 
See \cref{fig:notmaximal} for an example  where the crossing graph is a caterpillar and \cref{le:initial-frame-not-maximal} in the appendix for the details.

\begin{figure}
	\centering
	\begin{minipage}[b]{0.45\linewidth}
		\centering
		\includegraphics[page=1]{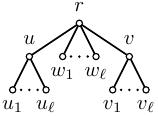}
		\subcaption{crossing graph\label{fig:notmaximal:graph}}
	\end{minipage}\hfil
	\begin{minipage}[b]{0.55\linewidth}
	\centering
	\includegraphics[page=6]{bad_catarpillars}
	\subcaption{non-maximal initial frame, $\mu_F(G) = \mu(G) = \frac32\ell + 1$ \label{fig:notmaximal:better}}
	\end{minipage}\\
	\begin{minipage}{0.25\linewidth}
		\centering
		\includegraphics[page=2]{bad_catarpillars}
		\subcaption{$\mu_F(G) \leq |I'| = \ell+2$ \label{fig:notmaximal:0}}
	\end{minipage}\hfil
		\begin{minipage}{0.25\linewidth}
		\centering
		\includegraphics[page=3]{bad_catarpillars}
		\subcaption{\nolinenumbers$\mu_F(G) \leq |I| = \ell+2$\label{fig:notmaximal:1}}
	\end{minipage}\hfil
		\begin{minipage}{0.25\linewidth}
		\centering
		\includegraphics[page=4]{bad_catarpillars}
		\subcaption{\nolinenumbers$\mu_F(G) \leq |I'| = \ell+1$\label{fig:notmaximal:2}}
	\end{minipage}\hfil	
			\begin{minipage}{0.25\linewidth}
		\centering
		\includegraphics[page=5]{bad_catarpillars}
		\subcaption{\nolinenumbers$\mu_F(G) \leq |I'| = 2$\label{fig:notmaximal:3}}
	\end{minipage}
	
	\caption{\label{fig:notmaximal}
        The crossing graph $X$ of some graph $G$.
		The set $I \subseteq V(X)$ of red vertices corresponds to the initial frame $F$. Blue vertices are a maximum independent set  $I'$ within $X-I$.
		(c)-(f) If we choose $I$ as a maximal independent set, then there is a frame with at most $\min\{|I|,|I'|\} \leq \ell+2$ edges. However, there is a planar story (b) with at least $\frac32\ell + 1$ edges in any frame. Numbers close to the vertices indicate the order in which we add the respective edges into the planar story.} 
\end{figure}

\section{Problem Complexity}\label{se:complexity}

In \cref{se:heuristic} we show how to solve \textsc{MaxMinFramePlanarStory($G$)} optimally in linear time when $G$ is a 2-plane graph and in cubic time if $G$ is a 3-plane graph whose crossing graph has no cycles. We recall that a geometric graph is \emph{$k$-plane} if each edge is crossed at most $k$ times.
However, \textsc{MaxMinFramePlanarStory($G$)} is NP-hard in general. 
To this end, consider its decision version 
\textsc{MaxMinFramePla\-narStoryD($G$,$m$)}: 
Given a graph~$G$ and a target value~$m$, does $G$ admit a planar story $\sigma = \{F_1, F_2, \dots, F_\tau\}$ such that~$\mu(\sigma) \geq m$?

\begin{restatable}
[\restateref{thm:np-complete}]
{theorem}{npcomplete}
    \label{thm:np-complete}
Problem \textsc{MaxMinFramePlanarStoryD($G,m$)} is NP-complete. 
\end{restatable}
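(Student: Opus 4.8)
The plan is to show membership in NP and then NP-hardness by reduction from a known NP-hard problem. Membership in NP is routine: a planar story $\sigma = \langle F_1, \dots, F_\tau\rangle$ has $\tau \le |E(G)| + 1$ frames (each step after the first introduces a new edge, and $F_1$ has at most $|E(G)|$ edges), so $\sigma$ is polynomial in size; given the ordered list of edges entering the visualization together with $F_1$, one can verify in polynomial time that each $F_i$ is crossing-free, that consecutive frames differ by exactly one incoming edge and the deletion of precisely the edges crossing it, that every edge appears in a contiguous block, and that every $m_i \ge m$. So the work is in the hardness proof.

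For hardness I would work entirely in the language of the crossing graph $X$ (a segment intersection graph, but also capable of being an arbitrary planar graph by the cited results of Chalopin--Gon\c{c}alves and Gon\c{c}alves--Isenmann--Pennarun). By \cref{cor:pair} the answer to the decision problem is bounded above by the size of a maximum pair of disjoint independent sets, and intuitively the ``hard'' instances should be ones where one is essentially forced to use a near-balanced bipartition of $V(X)$ into two large independent sets and then transition between them edge by edge. A natural source of hardness is therefore a balanced-independent-set / graph-partition style problem. Concretely, I would aim to reduce from a variant where one must decide whether a graph $H$ admits two disjoint independent sets each of size $k$ — equivalently, whether $H$ has an induced bipartite subgraph on $2k$ vertices with a balanced $2$-coloring — which is NP-hard. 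The reduction would build a geometric graph $G$ (via its crossing graph) so that any planar story with large $\mu$ must essentially start at one of the two independent sets, drain edges from it while filling the other, and the intermediate frames stay large only if the two sets are genuinely large and almost ``crossing-independent'' in the required way. Gadgets would be added to force the initial and final frames to coincide with the two sides of the partition and to control the order in which edges can be swapped, so that the per-frame size never dips below the target unless the partition exists.

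The main obstacle I anticipate is controlling the intermediate frames: unlike a clean ``two independent sets'' statement, a planar story forces a monotone, one-edge-at-a-time schedule, and when a new edge $e$ arrives it kills \emph{all} current edges crossing it, not a chosen subset. So I need gadget structure in $X$ ensuring (a) that edges of the target ``source'' frame can be removed one at a time in some order without a single incoming edge wiping out too many of them at once, and (b) that edges of the ``sink'' frame can be introduced without prematurely destroying each other or the surviving source edges. Designing a gadget whose crossing graph simultaneously realizes an arbitrary instance of the partition problem and admits such a gentle schedule — while still being a segment intersection graph — is the delicate part; I would likely use long ``path-like'' gadgets in $X$ (paths are easily realized as segment intersection graphs) to sequence the swaps, padded with large cliques or independent sets of dummy edges to pin down $\mu(G)$ to exactly the target. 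Verifying that the constructed $G$ has the claimed $\mu(G)$ in both directions — and that the construction is polynomial — then completes the proof.
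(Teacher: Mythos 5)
Your NP-membership argument is fine and matches the paper's. The hardness part, however, is a research plan rather than a proof: you describe a reduction from a balanced two-disjoint-independent-sets problem, but the gadgets that would ``force the initial and final frames to coincide with the two sides of the partition'' and ``control the order in which edges can be swapped'' are never constructed, and you yourself identify the core difficulty (an incoming edge deletes \emph{all} current edges crossing it, so one must certify that a gentle one-at-a-time schedule exists in one direction and cannot be beaten in the other) without resolving it. Neither direction of the iff is established, so as written this is a gap, not a proof.

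It is worth seeing how the paper sidesteps everything you were worried about. It reduces from \textsc{PlanarMaximumIndependentSetD}: given a planar $H$ and target $k$, the crossing graph is simply $X = H \,\dot\cup\, S$ where $S$ is a star with $k+1$ leaves; $X$ is planar, hence a segment intersection graph, hence realizable as a crossing graph, and the target is $m = k+1$. Because $H$ and $S$ are \emph{disjoint}, there is no scheduling interaction to control. If $H$ has an independent set $I$ of size $k$, the story starts with $I$ plus the star center (size $k+1$), then inserts the $k+1$ leaves (killing only the center, and the leaves are never removed afterwards since nothing else crosses them), then inserts the remaining $H$-edges in any order; every frame has size at least $k+1$. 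Conversely, in any story with $\mu \geq k+1$, look at the single frame containing the star-center edge: no leaf can coexist with it, so that frame must contain at least $k$ pairwise non-crossing edges of $H$, i.e.\ an independent set of size $k$ in $H$. The whole argument hinges on one local observation about the star, not on orchestrating a global transition between two large independent sets; your plan, even if completable, would require substantially more machinery (and a separate hardness proof for your source problem, which the paper only needs for a different statement, \cref{thm:balancedHard}).
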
   
\begin{proof}[Proof Sketch:]
It is easy to see that \textsc{MaxMinFramePla\-narStoryD($G$,$m$)} is in NP.
To prove the hardness,
we apply a reduction from  
\textsc{PlanarMaximumIndependentSetD($H$,$k$)}, i.e., the problem of deciding whether a planar graph $H$ admits an independent set of size at least $k$; this problem is known to be NP-hard~\cite[Theorem 4.1]{m-fcgpa-01}. To this end, let $X$ be the disjoint union of $H$ and a star $S$ with $k+1$ leaves (see \cref{fig:npCompleteness}). Since $X$ is planar, it is a segment intersection graph~\cite{chalopinG:stoc09}, i.e., a crossing graph of some geometric graph $G$. We show that $\mu(G)\geq k+1 =: m$ if and only if $H$ contains an independent set of size at least $k$.

Assume first that $H$ contains an independent set $I$ of size at least $k$. Consider the following planar story. The initial frame contains the edges corresponding to $I$ and the center of the star $S$. The size of the initial frame is $k+1$. For the subsequent frames, we add the edges corresponding to the leaves of $S$, in any arbitrary order. This increases the size of the frame up to $2k+1$. Finally, we add in arbitrary order all edges corresponding to vertices of $H$ that were not in $I$. During this process, the edges corresponding to the leaves of $S$ will never be removed. Thus, the size of each frame is at least $k+1$.

Assume now that there is a planar story $\sigma$ of $G$ with $\mu(\sigma) \geq k+1$. We show that $H$ contains an independent set of size at least $k$. Let $F$ be the frame of $\sigma$ that contains the edge corresponding to the center of $S$. In this frame, there is no edge corresponding to a leaf of $S$. This implies that frame $F$ must contain at least $k$ non-intersecting edges that correspond to at least $k$ independent vertices of $H$.
\end{proof}

\begin{figure}
    \centering
    $H:$ \includegraphics[page=10]{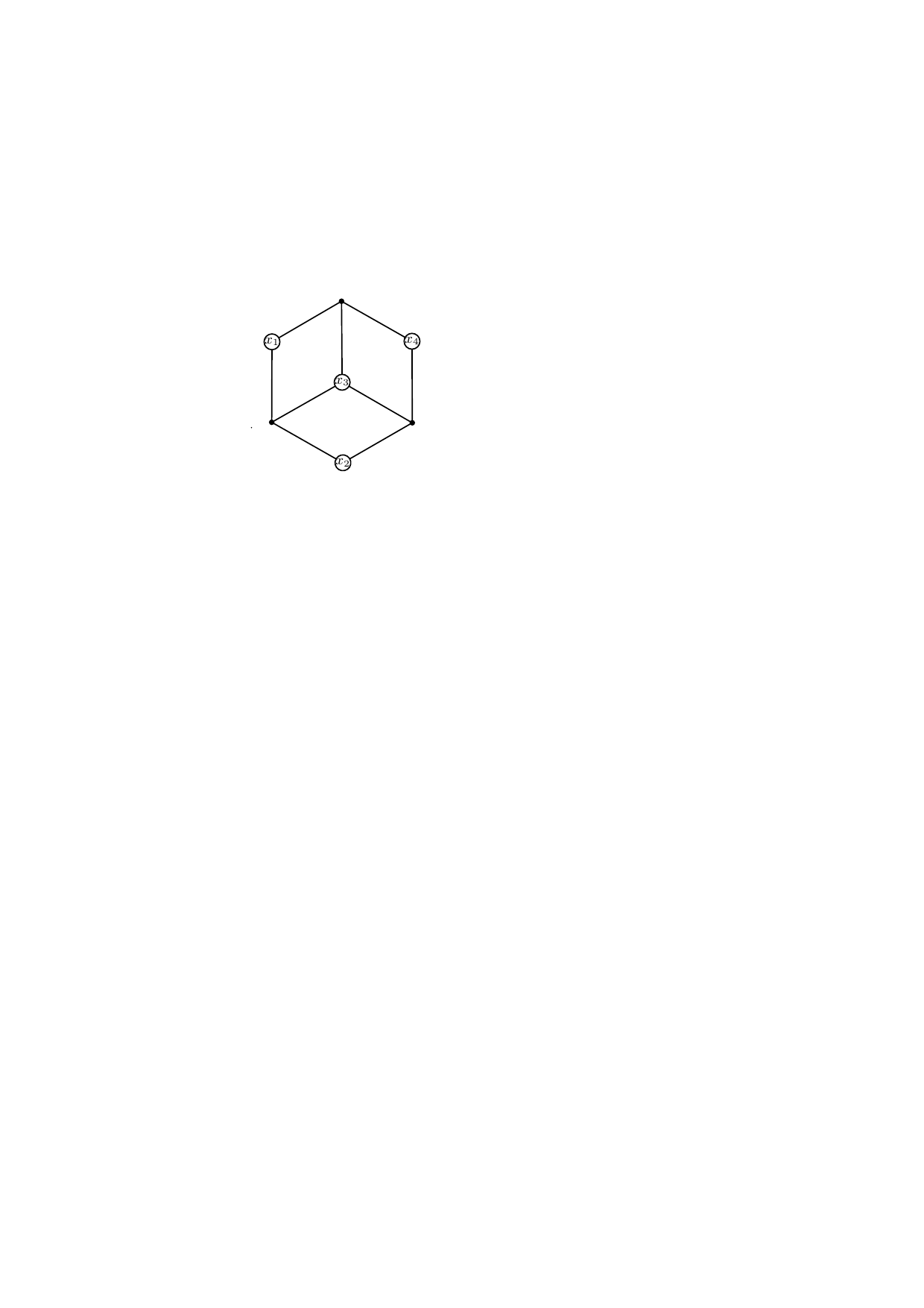} \rule{2cm}{0cm}
    $S:$ \includegraphics[page=11]{npComplete.pdf}
    \caption{If the size of a maximum independent set of $H$ is $k$ then the size of the maximum frame containing the center of $S$~--~indicated by red vertices~--~is at most $k+1$.
}
    \label{fig:npCompleteness}
\end{figure}

\section{A Greedy Heuristic}\label{se:heuristic}

We devise variants of a greedy heuristic for the problem \textsc{MaxMinFramePlanarStory($G$)}. 
The heuristic computes the frames one by one. Throughout the algorithm, we call the edges of the current frame \emph{current edges}, the edges that were already removed \emph{past edges}, and those that still have to be inserted \emph{future edges}. The \emph{current/future degree} of an edge $e$ is the number of current/future edges that cross $e$. 
A simple greedy strategy may work as follows.

\medskip

\noindent\textsf{Simple Greedy} 
\vspace{-1ex}
\begin{itemize}
\item{\sf Phase 1.} Start with some initial planar frame~$F_1$, and let $\tau=|E(G)|-|E(F_1)|+1$ be the total number of frames in the story.  
\item{\sf Phase 2.} For each step $i=1, \dots, \tau-1$, let $F_i$ be the current planar frame. Pick a future edge $e$ with the minimum current degree and let $F_{i+1}$ be the frame obtained from $F_i$ by adding $e$ and by removing the edges that are crossed by $e$.
\end{itemize}

\begin{figure}
	\centering
	\begin{minipage}{0.49\linewidth}
		\centering
		\includegraphics[width=0.65\linewidth,page=49]{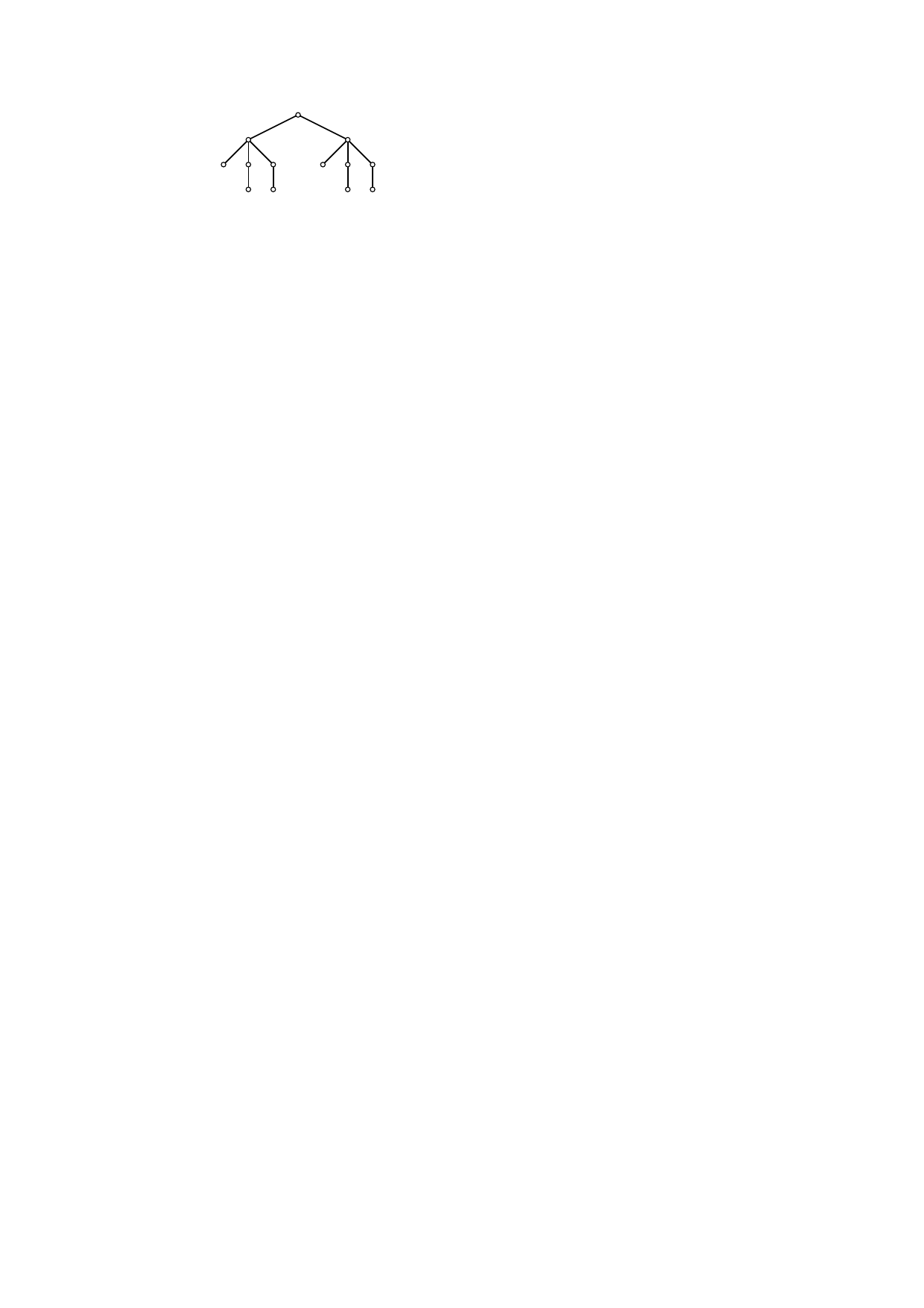}
		\subcaption{Crossing graph with  $4\ell + 5$ vertices\label{fig:crossingGraph}}
	\end{minipage}
	\begin{minipage}{0.49\linewidth}
		\centering
		\includegraphics[width=0.62\linewidth,page=57]{bad_trees_for_heuristic}
		\subcaption{$\mu(G)=\lfloor|E(G)|/2\rfloor=2\ell+2$\label{fig:tree_optimum}}
	\end{minipage}\\
	\begin{minipage}{0.55\linewidth}
		\centering
		\includegraphics[width=0.59\linewidth,page=52]{bad_trees_for_heuristic}
		\subcaption{\textsf{Simple Greedy} yields minimum frame size $2$\label{fig:tree_greedy}}
	\end{minipage}\hfil
	\begin{minipage}{0.44\linewidth}
		\centering
		\includegraphics[width=0.72\linewidth,page=54]{bad_trees_for_heuristic}
		\subcaption{$\mu_F(G)=2\ell$\label{fig:tree_better}}
	\end{minipage}\hfil
	\caption{\label{fig:crossinggraph}The crossing graph $X$ of some graph $G$. 
    The vertex labels indicate the order in which we add the respective edges of $G$ to the initial frame (red vertices).
    The two cases (c) and (d) 
    in the bottom row start with the same initial frame $F$. 
    }
\end{figure}

Since \textsf{Simple Greedy} always first increases an initial frame to a maximal planar subgraph, 
the example in \cref{fig:notmaximal}
implies that this strategy is not optimum in general.
Moreover, \textsf{Simple Greedy} does not yield a constant-factor approximation even if the crossing graph is a tree; see the example in \cref{fig:crossinggraph}. 
The alternative order in \cref{fig:tree_better}  is
better than the solution of \textsf{Simple Greedy}, as it ensures that the last frame remains sufficiently large. Hence, we introduce a refined version of the simple greedy strategy, described hereunder.
We first describe the general strategy of the two phases of the \textsf{Advanced Greedy} algorithm. In the following we provide details about alternative variants for executing both of them.

\medskip
\noindent\textsf{Advanced Greedy} 
\vspace{-1ex}
\begin{itemize}
    \item{\sf Phase 1.} Compute two edge-disjoint planar frames, the initial frame $F_1$ and the final frame $F_\tau$, where $\tau=|E(G)|-|E(F_1)|+1$.
    \item{\sf Phase 2.} For each step $i=1,\dots,\tau-1$, let $F_i$ be the current planar frame. To make sure that all edges of $F_\tau$ are indeed in the final frame, we say that a future edge is \emph{admissible} if it is 
    $(i)$ not in $F_\tau$ or $(ii)$ not crossed by any other future edge. 
    We pick an admissible edge $e$ with the minimum current degree. Let $F_{i+1}$ be the frame obtained from $F_i$ by adding $e$ and by removing the edges that are crossed by $e$. 
\end{itemize}

Note that \textsf{Phase~2} of \textsf{Advanced Greedy} is related to the reconfiguration problem~\cite{ito_etal:2011} that asks for a 
transformation between two feasible solutions of a problem such that all intermediate results are also feasible. In particular,~\cite{ito_etal:2020} considered the problem of traversing between two given independent sets of the same size. However, in \cite{ito_etal:2020} at each step exactly one vertex is removed and one vertex is added, and vertices may disappear and~reappear~several~times.

\subparagraph{Alternative variants for Phase~1.}
We propose three alternative strategies for computing the initial frame $F_1$ and the last frame $F_\tau$, or in other words two possibly large disjoint independent sets $I_1$ (initial frame) and $I_\tau$ (final frame) of the crossing graph $X$ of $G$.

\begin{description}
\item[\sf 1a.] $(I_1,I_\tau)$ is a Pareto optimal maximum pair of independent sets such that $|I_1| \leq |I_\tau|$. See \cref{thm:maximumPairFPT} for the computation.

\item[\sf 1b.] $I_1$ is a large independent set of $X$ of size at most $|E(G)|/2$ and $I_\tau$ is a maximal independent set of $X-I_1$. More precisely, we use the following approach: While $|I_1| \leq |E(G)|/2-1$ and $X-I_1-N(I_1)$ is not empty, iteratively add a vertex of minimum degree in $X-I_1-N(I_1)$ to $I_1$. Then, iteratively add a vertex of minimum degree in $X-I_1-I_\tau-N(I_\tau)$ to $I_\tau$. 

\item[\sf 1c.] 
Alternating between $i=1$ and $i=\tau$,  iteratively add a vertex of minimum degree in $X-I_{1}-I_\tau-N(I_i)$ (if there exists one) to $I_i$. If $|I_1| > |I_\tau|$, exchange $I_1$ and $I_\tau$.
\end{description}

Variants~{\sf 1b} and~{\sf 1c} can be implemented in linear time in the size of $X$ by sorting the vertices in buckets according to their degrees. 
Regarding Variant~{\sf 1a},  while it is NP-hard to find a maximum pair of independent sets in a graph 
(\cref{thm:balancedHard} in the appendix), 
there exists a fixed-parameter tractable (FPT) algorithm parameterized by the treewidth of the graph (\Cref{thm:maximumPairFPT}).
Therefore, Variant~{\sf 1a} can be much slower than Variants~{\sf 1b} and~{\sf 1c}, although it may lead to better solutions. We also remark that  \Cref{thm:maximumPairFPT} establishes a result of independent interest, which goes beyond the purposes of our specific problem.

A \emph{tree decomposition}\label{def:treeDecomposition} \cite{RS-84} of a graph $X$ is a tree $\mathcal T$ such that each vertex $\nu \in V(\mathcal T)$ is associated with a set $V(\nu) \subseteq V(X)$, called the \emph{bag} of $\nu$, 
such that 
\begin{inparaenum}[(a)]
    \item $\bigcup_{\nu\in V(\mathcal T)}V(\nu)=V(X)$,
    \item for each edge $uv \in E(X)$, there is a vertex $\nu \in V(\mathcal T)$ such that $u,v \in V(\nu)$, and,
    \item for each $v\in V(X)$, the subgraph of $\mathcal T$ induced by the vertices $\{\nu\in V(\mathcal T) \; | \; v \in V(\nu)\}$ is connected.
\end{inparaenum}
The \emph{width} of a tree decomposition is $\max_{\nu \in V(\mathcal T)}|V(\nu)|-1$. 
The \emph{treewidth} tw$(X)$ 
of a graph~$X$ is the minimum among the widths of any of its tree decompositions.

\begin{restatable}
[\restateref{thm:maximumPairFPT}]
{theorem}{maximumPairFPT}
\label{thm:maximumPairFPT}
    A Pareto optimal maximum pair of independent sets can be computed in $\mathcal O(3^\omega\omega^2n^3)$ time for a graph $X$ with $n$ vertices 
    and a tree decomposition $\mathcal T$ of width~$\omega$. 
\end{restatable}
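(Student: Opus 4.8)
The plan is to use standard dynamic programming over a tree decomposition, but with a state space rich enough to track a \emph{Pareto frontier} rather than a single optimum. First I would preprocess $\mathcal T$ into a nice tree decomposition (with leaf, introduce, forget, and join nodes, each bag of size at most $\omega+1$), which is a classical transformation that increases the number of nodes to $\mathcal O(\omega n)$ and does not change the width. Rooting $\mathcal T$ at an arbitrary node, for each node $\nu$ and each \emph{coloring} $c\colon V(\nu)\to\{0,1,2\}$ — where color $0$ means ``not in $I_1\cup I_2$'', color $1$ means ``in $I_1$'', color $2$ means ``in $I_2$'' — I would store the set $\mathrm{Pareto}(\nu,c)$ of all achievable pairs $(a,b)$, where $a$ (resp.\ $b$) is the number of vertices assigned color $1$ (resp.\ $2$) in the subgraph $X_\nu$ induced by the union of bags below $\nu$, restricted to assignments that are consistent with $c$ on $V(\nu)$, that induce no edge inside $I_1$ and none inside $I_2$, and that are \emph{Pareto optimal} among such assignments (i.e.\ no other achievable pair dominates them coordinatewise). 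Since $a,b\le n$, each such frontier has at most $n+1$ entries, and there are $3^{\omega+1}$ colorings per node.

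The transitions are the usual ones. At an introduce node adding vertex $v$: colorings with $c(v)=0$ inherit the child frontier unchanged; colorings with $c(v)=1$ are only valid if $v$ has no neighbor in $V(\nu)$ colored $1$, in which case every child pair $(a,b)$ becomes $(a+1,b)$; symmetrically for $c(v)=2$. At a forget node removing $v$: the frontier for a coloring $c$ of the smaller bag is obtained by merging the frontiers of the (up to three) child colorings that agree with $c$ off $v$, then discarding dominated pairs. At a join node with children $\nu_1,\nu_2$ sharing bag coloring $c$: combine by taking, for each $(a_1,b_1)\in\mathrm{Pareto}(\nu_1,c)$ and $(a_2,b_2)\in\mathrm{Pareto}(\nu_2,c)$, the pair $(a_1+a_2-\alpha,\,b_1+b_2-\beta)$ where $\alpha,\beta$ are the numbers of bag-vertices colored $1$ resp.\ $2$ (subtracted to avoid double counting), then prune. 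Correctness of each transition follows from the defining connectivity property of tree decompositions exactly as in the textbook independent-set DP, the only new ingredient being that we carry a frontier of size $\mathcal O(n)$ instead of a single number; at the root, scanning the $3^{\omega+1}$ frontiers for the pair maximizing $\min\{a,b\}$, and among those the one maximizing $\max\{a,b\}$, yields the Pareto optimal maximum pair.

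For the running time: each node stores $3^{\omega+1}$ frontiers of $\mathcal O(n)$ pairs. Introduce and forget nodes cost $\mathcal O(3^{\omega+1}\cdot n)$ (a constant number of frontier merges, each sorting/pruning $\mathcal O(n)$ pairs, which can be done in $\mathcal O(n)$ since the lists are kept sorted by first coordinate). The join node is the bottleneck: naively pairing two frontiers of size $\mathcal O(n)$ costs $\mathcal O(n^2)$ before pruning, so a join node costs $\mathcal O(3^{\omega+1}\cdot n^2)$; with $\mathcal O(\omega n)$ nodes this gives $\mathcal O(3^{\omega}\,\omega\, n^3)$, and absorbing the constant base-change and the $\omega$ factor from bag manipulations yields the claimed $\mathcal O(3^{\omega}\omega^2 n^3)$. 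The main obstacle — and the place where a careless argument would go wrong — is the join step: one must be sure that pruning to the Pareto frontier at every node loses nothing, i.e.\ that an assignment which is dominated in a subtree can never become part of a globally Pareto optimal solution. This holds because the objective $(\,\sum_{\text{vertices}} [\text{color}=1],\ \sum [\text{color}=2])$ is \emph{separable and monotone} over the tree: replacing a subtree partial assignment by a dominating one (with the same bag coloring) can only increase both global coordinates, so the exchange argument goes through and the $\mathcal O(n)$ bound on frontier sizes is preserved throughout the computation.
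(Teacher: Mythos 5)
Your proposal is correct and follows essentially the same approach as the paper: a dynamic program over the tree decomposition with three-colorings of each bag and a Pareto frontier of size-$\mathcal O(n)$ pairs per state, pruned at every combination step. The only difference is cosmetic — you first convert to a nice tree decomposition with introduce/forget/join nodes, whereas the paper combines the children of each node one by one on the given decomposition — and your running-time accounting lands within the claimed $\mathcal O(3^{\omega}\omega^2 n^3)$ bound just as the paper's does.
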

\begin{proof}[Proof Sketch:]
     The proof follows the ideas of the FPT approach for constructing a minimum vertex cover as proposed by Niedermeier~\cite{niedermeier_book},  enhancing it with the concept of Pareto optimal pairs. 
    Root $\mathcal T$ at an arbitrary vertex. We use dynamic programming.  For a vertex~$\nu$ of~$\mathcal T$ let 
    $X(\nu)$ be the subgraph of~$X$ induced by the vertices in the bags of the subtree of $\mathcal T$ rooted at~$\nu$.
    For each map $c\colon V(\nu)\rightarrow \{$red, blue, white$\}$, we  compute the list $L(\nu,c)$ of Pareto-optimal pairs~$(\alpha,\beta)$ for which $X(\nu)$ contains two disjoint independent sets $I_1$ and $I_2$ of size~$\alpha$ and~$\beta$, respectively, such that the red vertices of~$V(\nu)$ are contained in~$I_1$, the blue vertices of~$V(\nu)$ in~$I_2$, and the white vertices of $V(\nu)$ are neither in $I_1$ nor~$I_2$. 
\end{proof}

\begin{figure}
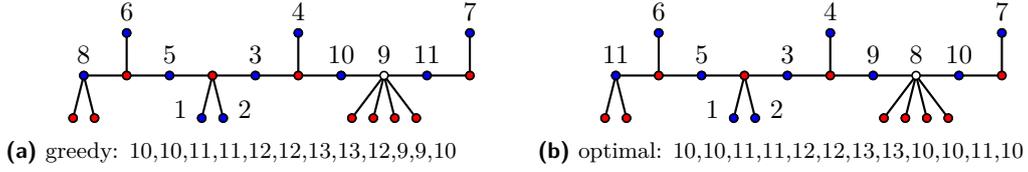

    \centering
    \begin{minipage}{0.5\linewidth}
	\centering
	\includegraphics[page=80]{bad_trees_for_heuristic}
	\subcaption{greedy: 10,10,11,11,12,12,13,13,12,9,9,10}
    \end{minipage}%
    \begin{minipage}{0.5\linewidth}
	\centering
	\includegraphics[page=79]{bad_trees_for_heuristic}
	\subcaption{optimal: 10,10,11,11,12,12,13,13,10,10,11,10}
    \end{minipage}
    \caption{A Pareto-optimal maximum pair of independent sets in the crossing graph as initial and last frame with a non-optimal greedy story and a min-frame optimal story. 
    The edges are added according to the numbers of the vertices in the crossing graph to the initial frame (red vertices).
    The subcaptions show the sequences of the frame sizes. 
   }
    \label{fig:greedy_notOPt-givenInitialLast}
\end{figure}

Given a Pareto-optimal maximum pair $(I_1,I_\tau)$ of independent sets, \textsf{Advanced Greedy} 
{\sf 1a} does not necessarily result in a planar story with minimum frame size 
among all planar stories with initial frame $I_1$ and final frame $I_\tau$. 
An example is given in \cref{fig:greedy_notOPt-givenInitialLast}: here, there is a vertex that is neither in the initial nor in the final frame whose degree is greater than~3.

\begin{restatable}
[\restateref{thm:heurOptCases}]
{lemma}{heurOptCases}
\label{thm:heurOptCases}
	Starting from a Pareto-optimal maximum pair $(I_1,I_\tau)$ of independent sets,
	\textsf{Advanced Greedy} yields a min-frame optimal planar story for a  geometric graph $G$ if the crossing graph $X$ contains no cycles and the vertices not in $I_1 \cup I_\tau$ have degree at most three.
\end{restatable}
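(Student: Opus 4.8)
The plan is to analyze the structure of $X$ under the stated hypotheses and show that \textsf{Advanced Greedy} never lets a frame drop below the bound $\min\{|I_1|,|I_\tau|\}$, which by \cref{cor:pair} is an upper bound on $\mu(G)$ whenever $(I_1,I_\tau)$ is a maximum pair. Since the bound $\mu(G)\le\min\{|I_1|,|I_\tau|\}$ is already available, it suffices to prove the matching lower bound $\mu(\sigma)\ge\min\{|I_1|,|I_\tau|\}$ for the story $\sigma$ produced by the heuristic. First I would fix notation: let $W=V(X)\setminus(I_1\cup I_\tau)$ be the set of ``middle'' vertices, each of degree at most three in the forest $X$. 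I would track, at each step $i$, which vertices of $I_\tau$ are already ``safely placed'' (present in the current frame and no longer crossed by any future edge, hence admissible and never again removable) versus which are still waiting, and similarly how many vertices of $I_1$ have already been evicted.

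The key structural observation is that because $X$ is a forest and the middle vertices have degree $\le 3$, the ``damage'' done when a middle edge $e$ enters is very limited: $e$ removes at most its current-degree many edges, and the admissibility rule guarantees that $e$ never removes a safely-placed vertex of $I_\tau$. The second main step would be a careful accounting of the frame size over the three natural stages of the run: (a) while the algorithm is still inserting admissible middle edges that are ``leaves toward $I_\tau$'' or are not in $I_\tau$, (b) the transition where vertices of $I_\tau$ become admissible and get inserted, and (c) the tail where the remaining middle edges and finally the last $I_1$-evictions happen. In each stage I would show that the number of edges currently displayed equals $|I_1|$ minus the number of $I_1$-vertices evicted so far, plus the number of middle/$I_\tau$ vertices inserted so far, and that the greedy choice of minimum current degree — together with the forest structure — forces these two counters to stay balanced enough that the total never falls below $\min\{|I_1|,|I_\tau|\}$. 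Concretely, whenever an $I_1$-vertex is evicted, the edge causing the eviction is itself added, so the net change is $\ge 0$ unless that edge simultaneously evicts several $I_1$-vertices; here the degree-$\le 3$ bound on middle vertices, plus the fact that $I_1$ is independent (so consecutive evictions are spread out), must be used to bound the transient dips.

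The cleanest way to organize stage (b) is probably to argue that, once all future edges crossing a given $v\in I_\tau$ are gone, $v$ is admissible with current degree $0$ (since $I_\tau$ is independent and any crossing middle edge would make $v$ non-admissible only if still future); so $v$ can be added for free, and the greedy rule, which prefers minimum current degree, will indeed pick such $v$ before it picks anything that would shrink the frame. I would also need the forest hypothesis to ensure a valid ordering exists in which the middle edges are peeled off as leaves, so that the $I_\tau$-vertices become uncrossed-by-future in a controlled order rather than all at the very end. The main obstacle I anticipate is precisely this interleaving argument: showing that the greedy's local ``minimum current degree'' tie-breaking does not paint itself into a corner where many $I_1$-edges must be dropped in quick succession before any $I_\tau$-edge becomes addable. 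Handling that will likely require a potential-function argument — e.g. $\Phi_i = (\text{current frame size}) + (\text{number of future edges in } I_\tau \text{ not yet addable})$ or a variant — and a case analysis on the at-most-three neighbors of the middle vertex selected at each step, distinguishing whether those neighbors lie in $I_1$, in $I_\tau$, or in $W$. The forest structure should make each such case a small local check rather than a global one.
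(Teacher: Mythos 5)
There is a genuine gap: what you have written is a plan whose central step is left unresolved, and one of the concrete claims you do make is false. The false claim is in your stage (b): a vertex $v\in I_\tau$ that is no longer crossed by any future edge is indeed admissible and, once added, is never removed again, but it need \emph{not} have current degree $0$ --- it can be crossed by edges of $I_1$ or by middle edges that are currently displayed (admissibility only ignores \emph{future} crossings, not current ones). So blue vertices cannot in general be ``added for free'', and the greedy has no reason to prefer them over other degree-$\ge 1$ choices; in fact the evictions they cause are exactly what must be controlled. Your bookkeeping identity is also off (it omits evictions of middle vertices, which can enter and later be evicted), and the heart of the matter --- ruling out that the greedy ``paints itself into a corner'' --- is explicitly deferred to an unspecified potential function and case analysis, i.e., it is not proved.

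The paper's proof resolves precisely this point by establishing a unimodality statement rather than a balance of counters: let $t$ be the first step at which the greedy inserts a vertex of current degree at least $2$; before $t$ every insertion has current degree at most $1$, so frame sizes are non-decreasing and hence at least $|I_1|$, and the key claim is that from step $t$ on \emph{no} inserted vertex ever has current degree $0$, so frame sizes are non-increasing from then on and hence at least the final size $|I_\tau|$; combined with \cref{cor:pair} this gives optimality. That claim is proved by induction over the steps, per connected component of the graph induced by the not-yet-past vertices, maintaining the invariant that no admissible future vertex has current degree $0$ and at most one per component has current degree $1$; the proof uses the forest structure (removing the current neighbors of the inserted vertex disconnects things in a controlled way, with exactly one ``new boundary'' vertex per component), the degree-$\le 3$ bound on vertices outside $I_1\cup I_\tau$, and the fact --- which your plan never invokes --- that Pareto-optimality/maximality of $(I_1,I_\tau)$ forces every such ``white'' vertex to have at least one neighbor in $I_1$ and one in $I_\tau$. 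Without an argument of this kind (or a worked-out substitute for it), your proposal does not establish the lower bound $\mu(\sigma)\ge\min\{|I_1|,|I_\tau|\}$.
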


\begin{proof}[Proof Sketch]
    The frame sizes are first monotonically increasing and then monotonically decreasing and,
    thus, at least $\min\{|I_1|,|I_\tau|\}$, which, by \cref{cor:pair},~is~optimum. 	
\end{proof}

\begin{restatable}
[\restateref{thm:2and3planar}]
{theorem}{TwoandThreeplanar}
\label{thm:2and3planar}
    \textsf{Advanced Greedy} 1a solves \textsc{MaxMinFramePlanarStory($G$)} optimally
    \begin{inparaenum}[(a)]
        \item in linear time for 2-plane geometric graphs, and
        \item in cubic time for those 3-plane geometric graphs for which the crossing graph contains no cycles.
    \end{inparaenum}
\end{restatable}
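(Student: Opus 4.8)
The plan is to invoke \cref{thm:heurOptCases} as the engine in both cases, combined with the structural simplicity of $k$-plane graphs for small $k$. First observe that if $G$ is $k$-plane then every vertex of the crossing graph $X$ has degree at most $k$: a vertex of $X$ is an edge $e$ of $G$, and its neighbors are exactly the edges crossing $e$, of which there are at most $k$. So for $k=2$ the crossing graph has maximum degree $2$, hence it is a disjoint union of paths and cycles; for $k=3$ it has maximum degree $3$.

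For part (a), I would argue that $X$ being a union of paths and cycles lets us compute a Pareto optimal maximum pair $(I_1, I_\tau)$ of independent sets in linear time directly (not via the treewidth FPT algorithm): on each path and each cycle component the maximum independent set and the maximum \emph{pair} of disjoint independent sets can be read off by an elementary case analysis on the component length modulo small numbers, and these combine additively over components. The remaining obstacle is the cycle components, since \cref{thm:heurOptCases} requires $X$ to be acyclic. The key claim to establish here is that cycle components are harmless: a cycle component $C$ of $X$ corresponds to a set of edges of $G$ that pairwise cross in a cyclic pattern, and we can handle such a component in isolation --- any linear arrangement of its vertices that alternates appropriately keeps at least $\lfloor |C|/2\rfloor$ of its edges present at all times, which matches the contribution of $C$ to $\min\{|I_1|,|I_\tau|\}$. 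So after choosing $(I_1,I_\tau)$ Pareto-optimally, on the acyclic part \cref{thm:heurOptCases} applies verbatim (all non-$(I_1\cup I_\tau)$ vertices have degree at most $2\le 3$), and on the cyclic part the component-wise bound shows the greedy frame size never drops below optimum. By \cref{cor:pair} this frame size $\min\{|I_1|,|I_\tau|\}$ is an upper bound, so it is optimal. All steps --- computing $X$ via Balaban's algorithm in $\mathcal O(m\log m+\chi)$, which for $2$-plane graphs is $\mathcal O(m\log m)$, bucket-sorting, the component analysis, and the greedy sweep --- run in linear time (strictly, near-linear due to the $\log$; I would state it as linear in the size of $X$ after its construction, following the paper's convention).

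For part (b), the crossing graph has maximum degree $3$ and is assumed acyclic, i.e.\ a forest, so its treewidth is $1$. Then \cref{thm:maximumPairFPT} with $\omega=1$ computes a Pareto optimal maximum pair $(I_1,I_\tau)$ in $\mathcal O(3^1\cdot 1^2\cdot n^3)=\mathcal O(n^3)$ time, i.e.\ cubic. Since $X$ is acyclic and has maximum degree $3$, in particular every vertex not in $I_1\cup I_\tau$ has degree at most three, so the hypotheses of \cref{thm:heurOptCases} are met and \textsf{Advanced Greedy} 1a produces a min-frame optimal planar story. The bottleneck is the $\mathcal O(n^3)$ pair computation; the greedy Phase~2 sweep and the crossing-graph construction are dominated by it.

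The main obstacle I anticipate is part (a)'s treatment of cycle components of the crossing graph, since \cref{thm:heurOptCases} explicitly excludes cycles. I would need to verify carefully that a cycle component can be processed independently of the rest of the story --- in particular that inserting its edges in a suitable cyclic order interacts with no other component (which holds because distinct components of $X$ are non-adjacent, so edges in different components never cross) and that the greedy admissibility rule does not force a premature large drop. A clean way to do this is to run the greedy sweep component by component, finishing each component before starting the next; within a single path or cycle component the frame size is unimodal (up then down) by the same argument as in \cref{thm:heurOptCases}, and the global minimum over all components is still $\min\{|I_1|,|I_\tau|\}$ because each component contributes its own local minimum, which by Pareto-optimal choice of the pair equals that component's share. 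Making this decomposition-and-recombination argument precise is the only non-routine part; everything else is bookkeeping.
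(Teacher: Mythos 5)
Your part (b) coincides with the paper's argument: a cycle-free crossing graph of a $3$-plane graph is a forest of maximum degree three, so \cref{thm:maximumPairFPT} with $\omega=1$ gives the cubic-time Pareto-optimal pair and \cref{thm:heurOptCases} gives optimality of the subsequent greedy phase. The gap is in part (a), and it is exactly at the point you flagged as the ``only non-routine part''. Your key claim -- that a cycle component $C$ of the crossing graph can be scheduled so that at least $\lfloor |C|/2\rfloor$ of its edges are present at all times, matching its contribution to $\min\{|I_1|,|I_\tau|\}$ -- is false for \emph{even} cycles. If $C$ has $2k$ vertices, the initial frame contains at most $k$ of them, and at the moment the first vertex of $C$ outside the initial frame is inserted, its two neighbours (both present, if the initial set on $C$ was the full alternating set) are deleted, so the count within $C$ necessarily drops to $k-1$; concretely, if $X=C_4$ then $\min\{|I_1|,|I_\tau|\}=2$ while $\mu(G)=1$. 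Consequently both halves of your optimality argument break: the greedy does \emph{not} always stay at or above $\min\{|I_1|,|I_\tau|\}$, and the upper bound of \cref{cor:pair} is not always attained, so it cannot by itself certify optimality.

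The paper's proof of the $2$-plane case therefore does not reduce to \cref{thm:heurOptCases} at all (whose hypothesis is that the whole crossing graph is acyclic -- applying it ``verbatim on the acyclic part'' is not licensed); instead it makes an explicit choice of $(I_1,I_\tau)$ on the paths and cycles (\cref{fig:2planar}, splitting the odd paths into two halves) and traces the greedy directly: odd cycles, even paths and half of the odd paths are absorbed without any decrease, each even cycle causes only a temporary drop of one, and the final frame has size $|I_\tau|$. The conclusion is that the greedy attains $\min\{|I_1|,|I_\tau|\}$ when there is an odd path or no even cycle, and $\min\{|I_1|,|I_\tau|\}-1$ otherwise; in the latter case a separate matching upper bound on $\mu(G)$ is proved (with no odd path, no independent set exceeds $|I_1|=|I_\tau|$, and any story must lose one edge when the first non-initial vertex of an even cycle enters). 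This ``the optimum also drops by one'' argument is entirely missing from your proposal and is indispensable. Your fallback of running the sweep component by component also quietly changes the algorithm, whereas the theorem is about \textsf{Advanced Greedy} 1a as defined; the paper instead argues about the orders the greedy itself can produce.
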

\begin{proof}[Proof Sketch.]
    The statement for 3-plane graphs is a corollary of \cref{thm:heurOptCases}. For 2-plane graphs, the crossing graph has maximum degree two and, thus, its connected components are paths and cycles, and a maximum pair $(I_1, I_\tau)$ of independent sets can be computed in linear time.  See \cref{fig:2planar}. Moreover, if we apply \textsf{Advanced Greedy} 1a then the minimum frame size is $\min\{|I_1|,|I_\tau|\}$, unless there is an even cycle but no odd path. In this case, the minimum frame size is $\min\{|I_1|,|I_\tau|\}-1$.
\end{proof}

    \begin{figure}
    \centering
    \begin{minipage}[b]{0.19\linewidth}
    \centering
    \includegraphics[page=1]{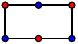}
    \subcaption{even cycles}
    \label{fig:2planar_evenCycle}
    \end{minipage}%
    \begin{minipage}[b]{0.185\linewidth}
    \centering
    \includegraphics[page=2]{2-planar.pdf}
    \subcaption{odd cycles}
    \label{fig:2planar_oddCycle}
    \end{minipage}%
    \begin{minipage}[b]{0.185\linewidth}
    \centering
    \includegraphics[page=3]{2-planar.pdf}
    \subcaption{even paths}
    \label{fig:2planar_evenPath}
    \end{minipage}%
    \begin{minipage}[b]{0.25\linewidth}
    \centering
    \includegraphics[page=5]{2-planar.pdf}
    \subcaption{half of odd paths}
    \label{fig:2planar_oddPath2}
    \end{minipage}%
    \begin{minipage}[b]{0.2\linewidth}
    \centering
    \includegraphics[page=4]{2-planar.pdf}
    \subcaption{other odd paths}
    \label{fig:2planar_oddPath1}
    \end{minipage}%
    \caption{The connected components of the crossing graph for a 2-plane graph. 
    }
    \label{fig:2planar}
\end{figure}

\subparagraph{Alternative variants for Phase~2.}

In each step of {\sf Phase~2}, the set $E'$ of admissible future edges with minimum current degree might contain multiple edges. For an edge $e \in E'$, denote by $C(e)$ the set of current edges that cross $e$. Also, let $E'' \subseteq E'$ be the subset of edges in $E'$ such that, for any $e \in E''$, the number of future edges crossing some edges in $C(e)$ is the maximum over all other edges in $E'$.
We consider two alternative strategies for selecting the next future edge $e$ that enters the drawing: \textsf{2a.} $e$ is chosen uniformly at random in $E'$; or \textsf{2b.} $e$ is chosen uniformly at random in $E''$. 

Variant \textsf{2b} applies a tie-breaking rule to further restrict the set of admissible edges that can be selected. The rationale is to maximize the amount of future edges that will benefit from the removal of the current neighbors of the edge that will enter the drawing.

\section{An ILP Exact Approach}\label{se:ilp}

We present an exact integer linear program (ILP) for
\textsc{MaxMinFramePlanarStory($G$)}. 
Let $\tau$ be the number of frames necessary, i.e., $\tau \leq |E(G)|$. 
For each edge $e\in E(G)$ and each
$t \in \{1, \dots, \tau\}$ we use a binary variable $x_e^t$ to represent 
that edge $e$ of $G$ is present in frame $t$. 
Additionally, we use \emph{lifting variables} $z_e^t \in \{0, 1\}$, 
where $z_e^t = 1$ indicates that an edge~$e$ of~$G$ appears in frame~$t$.
Finally, we use a variable $y_{\min}$ that
corresponds to the minimum number of edges present over all frames.
\begin{alignat}{3}
    &\text{maximize}\qquad & y_{\min} && \\
    &\text{subject to}    & x_e^t + x_f^t &\leq 1 & \qquad\forall t\in\{1, \dots, \tau\}\;\forall\{e, f\} \in E(X)\label{constr:no-crossing}\\
    &                     & \sum_{t=1}^\tau x_e^t &\geq 1 &\forall e\in E(G) \label{constr:edge-exists-in-frame}\\
    & &\sum_{e\in E(G)} x_e^t &\geq y_{\min}  &\forall t \in \{1, \dots, \tau\}\label{constr:min-objective}\\
    &&\sum_{t=1}^\tau z_e^t &=1 &\forall e \in E(G)\label{constr:appear-once}\\
    & &z_e^t + x_e^{t-1} &\geq x_e^t &\forall e\in E(G)\; \forall t\in\{2, \dots, \tau\}\label{constr:lift1}\\\
    & &z_e^1&\geq x_e^1 &\forall e\in E(G)\label{constr:lift2}\\\
    &&\sum_{e\in E(G)} z_e^t &\leq 1  &\forall t\in\{2, \dots, \tau\}\label{constr:at-most-one-edge-per-frame}\\
    & & x_e^t, z_e^t &\in \{0, 1\} & \\
    & & y_{\min} &\geq 0 &
\end{alignat}

\noindent Constraints~\ref{constr:no-crossing} and~\ref{constr:edge-exists-in-frame}  ensure that no two edges cross in the same frame and that each edge appears in some frames, respectively.
Constraints~\ref{constr:min-objective}
are a linearization of the objective~function.

Additionally, we need to model the following property. If an edge~$e$ is present in
frame $F_i$ and present in frame $F_j$ with $i < j$, then~$e$ also needs 
to be present in all intermediate frames. 
This can be directly translated, but requires a cubic
number of constraints for each edge. Hence, we introduce 
Constraints~\ref{constr:appear-once}--\ref{constr:lift2} that model the same
property.
Consider an edge~$e$ that appears at frame~$t$ for the first time.
Due to Constraints~\ref{constr:lift1} the lifting variable $z_e^t$ must be
set to 1 if the edge $e$ appears in frame $t$, as $x_e^{t-1} = 0$ or since $t=1$ (Constraints~\ref{constr:lift2}).
The consecutive frame $t+1$ can now include this edge, as $x_e^t=1$ and the
variable $z_e^{t+1}$ is no longer required.
However, once~$e$ disappears in a frame $j>t$, i.e., $x_e^j=0$, this is
no longer possible, as the lifting variables cannot be used anymore due to 
Constraints~\ref{constr:appear-once}.

Finally, we include Constraints~\ref{constr:at-most-one-edge-per-frame} to enforce
that at most one edge appears in each frame, except for the initial frame in which
multiple edges may appear.

\section{Experimental Analysis}\label{se:experiments}
We present the results of an extensive experimental analysis, whose goal is to compare the effectiveness and the efficiency of our heuristics and of the exact algorithm. 
%
For the heuristics, we denote by \textsf{AG-1x2y} the variant of algorithm \textsf{Advanced Greedy} that uses Variant \textsf{1x} for \textsf{Phase 1} and Variant \textsf{2y} for \textsf{Phase 2}, where \textsf{x} can be \textsf{a}, \textsf{b}, or \textsf{c}, while \textsf{y} can be \textsf{a} or \textsf{b}. In total we have 6 possible variants of our heuristic. 
In the following we describe the graph benchmark used for the experiments, the experimental setting, and the obtained results.

\subparagraph{Graph Benchmark.}
To evaluate the performance of our algorithms on different types of graphs, we used several test suites, totaling $637$ instances\footnote{To allow replicability, we made the whole graph benchmark publicly available. However for the sake of anonymity, the URL has been removed from the submitted paper.}. We considered two main types of instances: \textsc{Geometric Graphs} and \textsc{Crossing Graphs}.

\bigskip\noindent\textsc{Geometric Graphs.} A collection of both random and real graphs drawn with a force-directed algorithm. The different graph classes are as follows

\begin{itemize}
    \item \texttt{random graphs:} 
    $200$ geometric graphs: For each   $n \in \{10, 20, \dots, 100\}$ and for each density\footnote{The \emph{density} of a graph is the number of its edges divided by the number of its vertices.}  $d \in \{1.2, 1.6, 2.0, 2.4\}$, we randomly generated $5$ distinct graphs with $n$ vertices and $m=d \cdot n$ edges,
    with Erdős--Rényi's model, which guarantees uniform probability distribution. 
    For each generated graph we computed a straight-line drawing through the popular Fruchterman--Reingold force-directed algorithm~\cite{fr-91}. We exploited the NetworkX Python library~\cite{hagberg2008} both for generating the graphs and for computing their drawings. 
    The resulting sizes of the crossing graphs is shown in
    \cref{fig:random_graphs_crossing_graph_sizes}.

    \begin{figure}
        \centering
        \includegraphics{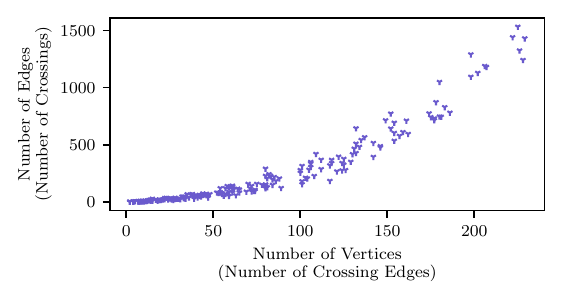}
        \caption{The sizes of the crossing graphs for the \texttt{random graphs}. The $x$-axis is the number of vertices of the crossing graph (i.e., the number of crossing edges of the instance); the $y$-axis is the number of edges of the crossing graph (i.e., the number of crossings in the instance).} 
        \label{fig:random_graphs_crossing_graph_sizes}
    \end{figure}

    \item \texttt{real graphs:} $12$ graphs with 30 to 379 vertices representing data from various real-world domains (see \Cref{tb:real-graphs}).
    They usually have higher density than the \texttt{random graphs} and, consequently, more edge crossings. All these graphs have been taken from the well-known Network Repository~\cite{nr-aaai15} (\url{https://networkrepository.com}). For each graph, we removed self-loops (if any) and computed a straight-line drawing with the implementation of Fruchterman--Reingold's force-directed algorithm in NetworkX. 
    \todo[disable]{Test a different drawing algorithm for the real
    graphs, since Fruchterman--Reingold produces drawings with many crossings. 
    $\Rightarrow$ The drawing algorithms available to networkx (and pygraphviz) did not
    produce significantly better drawings than the algorithm by Fruchterman and Reingold.
    }
\end{itemize}

    \begin{table}[htb]
    \centering
    \begin{tabular}{|l|l|r|r|r|}
    \hline
    {\bf Graph name} & {\bf Category} & $n$ & $m$ & $\chi$\\
    \hline
    insecta-beetle-group-c1-period-1 & animal social science network& $30$ & $185$ & $1,737$\\
    
    road-chesapeake & road network & $39$ & $170$ & $1,049$ \\
    
    eco-stmarks & eco network & $54$ & $353$ & $6,320$\\
    
    lesmis & miscellaneous network & $77$ & $254$ & $838$ \\

    ca-sandi\_auths & collaboration network & $86$ & $124$ & $8$\\
    
    gd06-theory & miscellaneous network & $101$ & $190$ & $1,015$\\

    polbooks & miscellaneous network & $105$ & $441$ & $2,465$\\
    
    adjnoun & miscellaneous network & $112$ & $425$ & $6,868$\\

    rajat11 & miscellaneous network & $135$ & $680$ & $290$\\

    email-enron-only & email network & $143$ & $623$ & $5,230$\\

    bwm200 & miscellaneous & $200$ &  $596$ & $7$\\

    ca-netscience & collaboration network & $379$ &  $914$ & $901$\\
    
    \hline
    \end{tabular}
    \rule{0pt}{1pt} 
    \caption{Real graphs with $n$ vertices, $m$ edges, and $\chi$ edge-crossings.
    }
    \label{tb:real-graphs}
\end{table}
 
\bigskip \noindent\textsc{Crossing Graphs.}  
We generate different types of crossing graphs that are planar. Recall that planar graphs are segment intersection graphs~\cite{chalopinG:stoc09,goncalvesIP:soda18}. Therefore, they always reflect the structure of some geometric graphs in terms of edge crossings. 
    
       \begin{itemize}
            \item \texttt{caterpillars}: $60$ $n$-vertex caterpillar trees, $10$ instances chosen uniformly at random for each value of $n \in \{10, 20, \dots, 60\}$. First, we selected the diameter $\delta$ based on the  probability of an $n$-vertex caterpillar to have such diameter;
            then we chose how to randomly attach the $n-\delta-1$ leaves to the $\delta-1$ internal vertices of the caterpillar~\cite{caterp-enumer}. We rejected with probability $0.5$ non-symmetric caterpillars, as they would have been generated twice. 
            Due to the large integers involved in the computation, this method allowed us to generate caterpillars with up to $60$ vertices.
            \item \texttt{trees}: $80$ general trees, with 
            $n \in \{10, 20, \dots, 80\}$ vertices; for each fixed $n$ we collected $10$ distinct instances, generated uniformly at random with the algorithm in~\cite{DBLP:journals/algorithmica/AlonsoRS97} 
            \item \texttt{series-parallel graphs}: $105$ series-parallel graphs, with 
            $n \in \{10, 20, \dots, 70\}$ vertices and density $d \in \{1.2, 1.4, 1.6\}$. For each pair $(n, d)$ we collected $5$ instances, 
            by repeatedly generating biconnected planar graphs 
            using 
            the OGDF library~\cite{DBLP:reference/crc/ChimaniGJKKM13} (\url{https://ogdf.uos.de/}), and discarding those graphs that were not series-parallel. 
            \item \texttt{planar graphs}: $180$ general connected planar graphs, with $n \in \{10, 20, \dots, 90\}$ and density $d \in \{1.2, 1.6, 2.0, 2.4\}$. Precisely, we generated $5$ instances for each pair $(n, d)$, still using the random generator available in the OGDF library~\cite{DBLP:reference/crc/ChimaniGJKKM13}.
       \end{itemize}

       Since \texttt{trees} and \texttt{caterpillars} have treewidth $1$ and the treewidth of \texttt{series-parallel graphs} is $2$, these instances 
       are of particular interest for evaluating the performances of the heuristics \textsf{AG-1a2y}, which works in polynomial-time for crossing graphs of bounded treewidth.

\subparagraph{Experimental setting.}
We performed the experiments on a Windows 11 Pro machine with an Intel Core i5-12500 CPU and 16 GB RAM. We used Gurobi \cite{gurobi} version 12.0.1 as ILP solver. We implemented the heuristics in Python, using the NetworkX library~\cite{hagberg2008}.

On those instances for which the exact algorithm was able to finish the computation within the given time (see below), we compared the optimum with the value computed by the various heuristics. Precisely, we measured the ratio between the minimum frame size of the story computed by each heuristic and the minimum frame size of the story computed by the exact algorithm. With a slight abuse of terminology, we call this value the \emph{approximation ratio}.
For the \texttt{random graphs} and \texttt{real graphs}, the computation of the approximation ratio is restricted to the subset of crossing edges, because crossing-free edges persist in all frames of a planar graph story, regardless of the algorithm used to compute it. 

For each instance, we also measured the runtime of each algorithm. 
We fixed an upper bound of $T = 15$ minutes for the runtime allowed for the exact algorithm, and we recorded the value of the best solution  achieved within~$T$ (i.e., the incumbent solution). For those instances for which the runtime of the exact algorithm exceeded $T$, we also measured the optimality gap, that is, the \emph{gap} between the incumbent solution (IS) and the upper bound (UB) estimated by the ILP solver at time $T$. In formula: $gap = \frac{UB-IS}{UB}$.  

Furthermore, after verifying that, for all instances optimally solved by the exact algorithm, the heuristics were also able to produce a solution in less time, we set a maximum time limit of $T' = 10$ minutes for the heuristics based on Variant \textsf{1a}, whose runtime may grow exponentially in the treewidth of the crossing graph. This was done to allow for faster experimentation, given the large number of instances. The instances for which Variant \textsf{1a} could not finish within $T'$ were considered unfeasible for \textsf{AG-1a2a} and \textsf{AG-1a2b}.

\subparagraph{Performance of the exact algorithm.} 
\Cref{fig:runtime_ilp} shows the percentage of instances per graph class that the ILP was able to solve within the given time limit of $T=15$ minutes ($900$ seconds). In each chart, the $x$-axis is the 
time in seconds and the $y$-axis (blue line) reports the number of instances solved optimally within the time at the corresponding $x$ coordinate. 
\Cref{fig:ilp-gaps} shows the optimality gap across the instances for each graph class (a point for each instance), where the instances were sorted by the number of crossings; points of instances with the same gap coincide in the chart. 

Regarding the \textsc{Geometric Graphs} benchmark, more than half (55\%) of the {\tt random graphs} were solved optimally. 
In particular, all instances with 90 or less crossing edges were solved optimally in the given time, and 90\% of the instances with 150 or less crossing edges were solved optimally.
Conversely, only the two {\tt real graphs} with small number of crossings (namely, the graphs bmw200 and ca-sandi\_auths) could be solved optimally within~$T$. However, for the graph rajat11 (which contains 290 crossings), the optimality gap is close to $0$ (namely, $0.08$). Overall, for most of the instances in the {\tt random graphs} and {\tt real graphs}, the optimality gap tends to increase with the number of edge crossings.

\begin{figure}[tb]
    \centering
    \includegraphics[width=1\columnwidth]{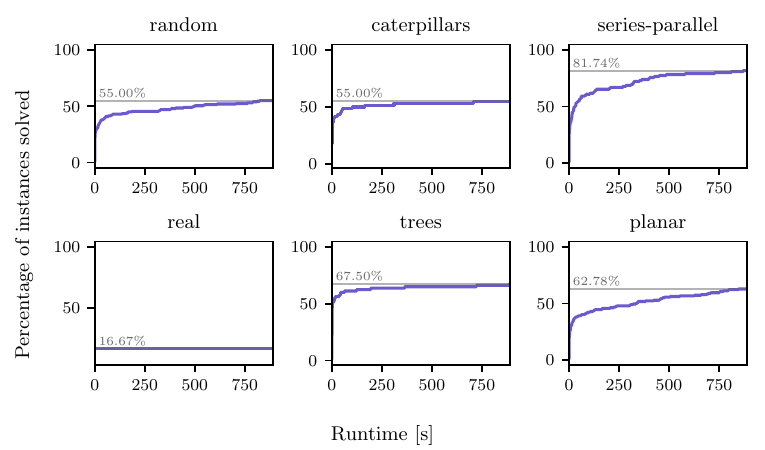}
    \caption{Percentage of instances solved (optimally) by the exact algorithm
    within time $T$.}
    \label{fig:runtime_ilp}
\end{figure}

Regarding the \textsc{Crossing Graphs} benchmark, most of the {\tt series-parallel graphs} (81.74\%) and a high percentage of the {\tt trees} (67.5\%) and {\tt planar graphs} (62.78\%) could be solved optimally. 
Interestingly, more randomly sampled trees were solved than caterpillars (55.00\%) in the given time. 
However, it is worth noting that the regular pattern observed in the optimality gaps for the {\tt caterpillars} and {\tt trees} is due to the fact that, for instances not solved to optimality within the time limit, the absolute difference between the best bound and the incumbent solution was exactly 1.

\begin{figure}[h!]
    \centering
    \includegraphics[width=1\columnwidth]{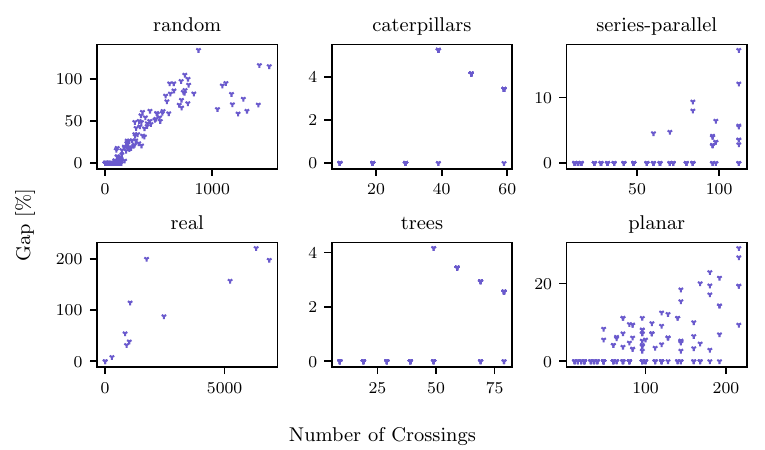}
    \caption{The optimality gap (\%) reached by the ILP solver in the given time for each graph class. The instances of each graph class is sorted in ascending order by the number of crossings.}
    \label{fig:ilp-gaps}
\end{figure}

\subparagraph{Performance of the heuristics.}
The runtime of all the heuristics is reported in \Cref{se:app-experiments} (see \cref{tb:running-time-real-graphs} for the {\tt real graphs} and \cref{fig:time_plots} for the others).
Heuristics \textsf{AG-1b2a}, \textsf{AG-1b2b}, \textsf{AG-1c2a}, and \textsf{AG-1c2b} computed all the instances of our graph benchmark very efficiently. On average they took less than $0.1$ seconds on the {\tt random graphs} and about $0.34$ seconds on {\tt real graphs} (the most expensive required about $1.6$ seconds). Also, they took less than $10$ milliseconds for the instances in the \textsc{Crossing Graphs} datasets.
Runtime differences among these heuristics are negligible across Variants~\textsf{2a} and~\textsf{2b} for \textsf{Phase 2}, although \textsf{2a} is slightly faster than \textsf{2b}. The runtime is mostly affected by the variant chosen for \textsf{Phase 1}, where Variant~\textsf{1b} is a bit faster than Variant~\textsf{1c}, as~expected.

\begin{figure}[tb]
	\begin{subfigure}[b]{.9\linewidth}
		\centering 
		\includegraphics[width=\linewidth,page=1]{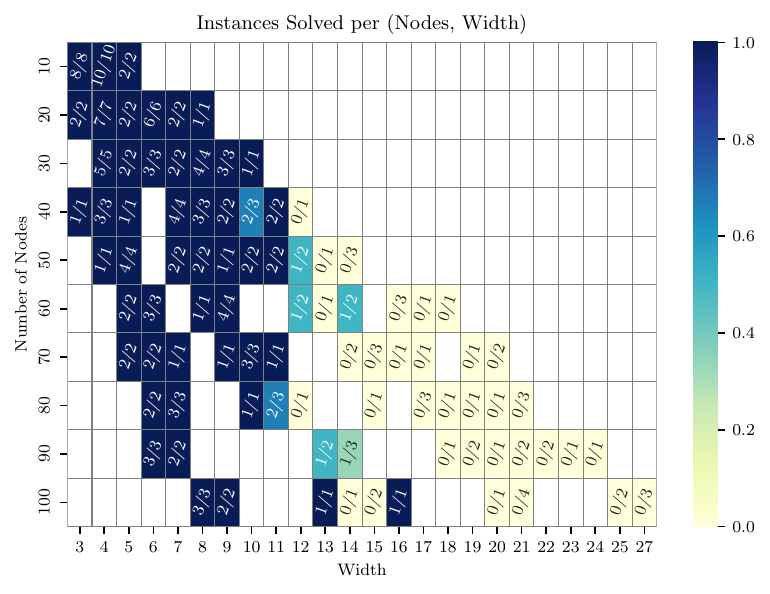} 
	\end{subfigure}
	\caption{\label{fig:solved_width} Fraction of the instances in the {\tt random graphs} dataset terminated by Variant {\tt 1a} within~$T'$, with respect to the number of graph nodes and the width of the tree decomposition.}
\end{figure}

Conversely, Variant~\textsf{1a} is significantly more expensive. We implemented it, using the Minimum Fill-in heuristic \cite{bodlaender_treewidth} to compute tree decompositions. For the analysis of the heuristics based on this variant, we took into account only those instances where a Pareto-optimal maximum pair of independent sets was found within the given time $T'=10$ minutes. Variant~\textsf{1a} was able to find a solution for all the instances in the \textsc{Crossing Graphs} benchmark, as the Minimum Fill-in heuristic produced tree decompositions of width at most six for planar graphs and at most two for series-parallel graphs. For the {\tt random graphs}, Variant~\textsf{1a} solved $137$ (of the $200$) instances within time $T'$; the fraction of instances solved for each combination of the two parameters ``number of nodes'' and ``width of the tree decomposition'' is illustrated in the heatmap of \cref{fig:solved_width}. 
For the {\tt real graphs}, Variant~\textsf{1a} solved only 3/12 instances within $T'$ (see \cref{tb:running-time-real-graphs} in \Cref{se:app-experiments}).

\begin{table}[b]
  \centering
  \begin{tabular}{|l || p{0.17\linewidth}| p{0.19\linewidth} | p{0.22\linewidth}| p{0.19\linewidth}|}
  	\hline
  	& \multicolumn{4}{r|}{\textbf{All Graphs Combined}}\\
  	\hline
  	& \textsf{\% Opt} & \textsf{Avg approx. ratio} & \textsf{Min--Max approx. ratio} & \textsf{SD approx. ratio} \\
  	\hline
  	\textsf{AG-1a2a} & $59.9\%$ (243/406) & $94.14\%$ & $50.00\%$ -- $100.00\%$ & $9.87\%$ \\
  	\hline
  	\textsf{AG-1a2b} & $63.1\%$ (256/406) & $94.72\%$ & $50.00\%$ -- $100.00\%$ & $9.37\%$ \\
  	\hline
  	\textsf{AG-1b2a} & $5.7\%$ (23/406) & $67.91\%$ & $33.33\%$ -- $100.00\%$ & $13.44\%$ \\
  	\hline
  	\textsf{AG-1b2b} & $5.4\%$ (22/406) & $68.21\%$ & $42.86\%$ -- $100.00\%$ & $13.07\%$ \\
  	\hline
  	\textsf{AG-1c2a} & $14.3\%$ (58/406) & $82.76\%$ & $44.44\%$ -- $100.00\%$ & $10.36\%$ \\
  	\hline
  	\textsf{AG-1c2b} & $15.3\%$ (62/406) & $83.45\%$ & $50.00\%$ -- $100.00\%$ & $10.35\%$ \\
  	\hline
  \end{tabular}
  \caption{The performance of our heuristics in comparison to the exact algorithm.}
  \label{tab:heuristic_performance}
\end{table}

\bigskip
Regarding the effectiveness of our heuristics, \Cref{tab:heuristic_performance} summarizes the comparison over all instances that the exact algorithm managed within the given time $T=15$ minutes. 
Detailed results for the different types of graphs are reported in \Cref{se:app-experiments} (\Cref{tab:heuristic_performance_separate}).
Each table reports: the percentage of instances solved optimally by the heuristics (\textsf{\%~Opt}); the approximation ratio (in percentage) averaged over all instances (\textsf{Avg approx. ratio}); the minimum and maximum approximation ratio over all instances (\textsf{Min--Max approx. ratio}), and the standard deviation of the approximation ratio (\textsf{SD approx. ratio}).   

The results show that \textsf{AG-1a2a} and \textsf{AG-1a2b} achieve a $\thicksim 60\%$ optimality rate and an average approximation ratio of $\thicksim 95\%$, significantly outperforming the alternatives \textsf{AG-1b2a/AG-1b2b} and \textsf{AG-1c2a/AG-1c2b} in both metrics. The standard deviation of the approximation ratio for \textsf{AG-1a2a} and \textsf{AG-1a2b} remains below $10\%$, indicating consistent performance. Notably, for the {\tt trees} and the {\tt caterpillars}, \textsf{AG-1a2a} and \textsf{AG-1a2b} achieve near-perfect optimality rates ($97.0\%$ and $100.0\%$, respectively) with approximation ratios close to $100\%$. In contrast, the heuristics based on Variant~\textsf{1b} achieve fewer than $6\%$ optimal solutions on average, with lower approximation ratios around $68\%$. The heuristics using Variant~\textsf{1c}, while also exhibiting low optimality rates ($\thicksim 15\%$), maintain a stronger average approximation ratio of $\thicksim 83\%$ and a good standard deviation ($\thicksim 10.4\%$). 

We remark that, for the {\tt real graphs} the exact algorithm managed only 2 instances within the given time $T$, and \textsf{AG-1a2a} and \textsf{AG-1a2b} managed only 1 additional instance within the given time $T'$. This data is not enough to get a clear picture of how the other heuristics compare on this dataset. However, by examining the data for all the instances in the {\tt real graphs} dataset, we confirm that also in this case the heuristics based on Variant~\textsf{1c} outperformed those based on Variant~\textsf{1b}, with an improvement of about $29\%$ on average.

\subparagraph{Main findings.} The exact algorithm typically handles geometric graphs with up to $90-100$ crossing edges in few minutes. However, it becomes generally impractical for real-world graphs that are locally dense and inevitably contain many more crossing edges. 
Regarding the heuristics, the difference between Variants~\textsf{2a} and~\textsf{2b}, in terms of both efficiency and effectiveness, is often negligible. 
A more detailed analysis of this phenomenon is given in \Cref{se:app-experiments}. 
Conversely, the choice of the variant for \textsf{Phase~1} makes a big~difference, namely:

\begin{itemize}
        \item  The heuristics based on Variant~\textsf{1a} yield solutions close to the optimum in many cases. However, they become computationally unfeasible for graphs with high number of nodes or with crossing graphs of high treewidth; in particular, they exhibit more or less the same limits as the exact algorithm on the \textsf{real graphs}, although they make it possible to manage all instances in the \textsc{Crossing Graphs} benchmark.
        
        \item The heuristics based on Variant~\textsf{1c} offer the best trade-off between efficiency and effectiveness. They perform fast on all instances and maintain high average approximation ratios (approximately $80\% - 88\%$, depending on the graph~class). 
\end{itemize}

\section{Final Remarks and Future Work}\label{se:conclusions}

We have proposed the \textsc{MaxMinFramePlanarStory} model, where we visualize a graph in a sequence of planar frames such that only one edge at the time enters the story and the minimum frame size is maximized.  We suggest the following research directions.

\subparagraph{RD1.} In addition to ensuring drawing stability and low visual complexity, our model promotes smooth transitions between consecutive frames (i.e., minimal local changes from one frame to the next), as only the current edges that intersect the upcoming edge are removed.
However, this approach may result in long stories. To balance these aspects, we can compress a planar story by allowing multiple edges to be inserted simultaneously: for each subsequence of consecutive, pairwise non-crossing edges, all of them can be added at once. In this way, the resulting compressed story remains planar, and its smallest frame is at least as large as that of the original story.
Nonetheless, enabling multiple edges to enter simultaneously in the visualization motivates the design of dedicated algorithms aimed at producing short stories while still maximizing the minimum frame size.

\subparagraph{RD2.} We focused on constructing a planar story of geometric graphs, i.e., of graphs with a given drawing. It would be interesting to investigate how different drawings of the same graph affect the quality of the story. 
Furthermore, what kind of crossing graphs would allow us to construct good edge stories?    

\subparagraph{RD3.} Our heuristics based on Variants \textsf{1b} and \textsf{1c} are fast enough to solve also more complex instances than those in our benchmark. It would be interesting to investigate whether the quality of these variants could be improved by using different heuristics for choosing the independent sets for the initial and final frames than just adding vertices of minimum degree. See, e.g., \cite{grossmannLammSchulzStrash:2024} for a  list of heuristics for computing large independent sets.

\subparagraph{RD4.} Regarding the problem complexity: Is
\textsc{MaxMinFramePlanarStoryD}($G$,$m$) NP-complete for $3$-plane graphs? Is the problem in FPT parameterized by the treewidth of the crossing graph of $G$? Is \textsf{Advanced Greedy} \textsf{1a} a constant-factor approximation algorithm?

\bibliography{references}

\clearpage
\appendix

\section{Full Version of the Omitted or Sketched Proofs}\label{app:missingProofs}

\begin{lemma}\label{le:initial-frame-not-maximal}
	There exists a geometric graph $G$ such that in every min-frame optimal story~$\sigma$ of $G$, the initial frame is not a maximal planar subgraph of $G$, even if the intersection graph is a caterpillar.
\end{lemma}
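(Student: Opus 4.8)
The plan is to exhibit the caterpillar $X$ of \cref{fig:notmaximal} as the crossing graph of a geometric graph and then reason entirely in terms of independent sets of $X$. Fix a sufficiently large even integer $\ell$ and let $X = X_\ell$ be the caterpillar in \cref{fig:notmaximal:graph}. Every planar graph --- in particular every caterpillar --- is a segment intersection graph~\cite{chalopinG:stoc09}, so there is a geometric graph $G$ whose crossing graph is $X$; since $X$ has no isolated vertices, \cref{re:crossing-edges} lets us identify planar frames of $G$ with independent sets of $X$ and planar stories of $G$ with the sequences $\sigma_X$ of \cref{se:preliminaries}. It then suffices to establish: \emph{(i)} $\mu(G) \ge \frac{3}{2}\ell + 1$, witnessed by an explicit planar story whose initial frame is a \emph{non-maximal} independent set of $X$; and \emph{(ii)} $\mu_F(G) \le \ell + 2$ whenever the planar frame $F$ is a maximal planar subgraph of $G$, i.e., whenever the corresponding independent set of $X$ is maximal. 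Since $\frac{3}{2}\ell + 1 > \ell + 2$ for $\ell \ge 4$, any min-frame optimal story $\sigma$ satisfies $\mu(\sigma) = \mu(G) \ge \frac{3}{2}\ell + 1 > \mu_{F}(G)$ for every maximal planar frame $F$, so its initial frame cannot be a maximal planar subgraph; this is exactly the assertion of the lemma.

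For \emph{(ii)}, let $I$ be a maximal independent set of $X$, let $F$ be the associated planar frame, and let $I'$ be a maximum independent set of $X - I$. By \cref{cor:bound}, $\mu_F(G) \le \min\{|I|, |I'|\}$. The caterpillar $X$ is assembled from $\ell$ copies of a small gadget hung along a spine, and a maximal independent set is pinned down, up to a constant number of global cases, by how it meets each gadget and the spine; the representatives of these cases are the sets drawn in \cref{fig:notmaximal:0,fig:notmaximal:1,fig:notmaximal:2,fig:notmaximal:3}. In each case a direct inspection shows that either $|I| \le \ell + 2$ (the situation of \cref{fig:notmaximal:1}) or $|I'| \le \ell + 2$ (one gets $\ell+2$, $\ell+1$, and $2$ in \cref{fig:notmaximal:0,fig:notmaximal:2,fig:notmaximal:3}, respectively). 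Either way $\min\{|I|, |I'|\} \le \ell + 2$, hence $\mu_F(G) \le \ell + 2$.

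For \emph{(i)}, take as initial frame the non-maximal independent set $F_1$ shown in red in \cref{fig:notmaximal:better}, with $|F_1| = \frac{3}{2}\ell + 1$, and fix a target last frame $F_\tau$, an independent set of $X - F_1$ of the same size. Insert the remaining $|E(G)| - |F_1|$ edges one at a time in the order indicated by the vertex labels in \cref{fig:notmaximal:better}; by construction, at each step the new edge deletes exactly the current edges that cross it. One checks that \emph{(a)} every edge of $G$ is present in a contiguous block of frames and the union of all frames is $E(G)$, so that $\sigma = \langle F_1, \dots, F_\tau\rangle$ is a valid planar story, and \emph{(b)} the frame sizes first increase and then decrease, never falling below $\frac{3}{2}\ell + 1$, whence $\mu(\sigma) = \frac{3}{2}\ell + 1$ and $\mu(G) \ge \frac{3}{2}\ell + 1$.

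I expect step \emph{(ii)} to be the main obstacle: one must argue that the four displayed families of independent sets genuinely exhaust all maximal independent sets of the caterpillar, which calls for a careful, if elementary, case analysis of the spine--leaf structure. A secondary (purely bookkeeping) task is to confirm in step \emph{(i)} that the prescribed insertion order keeps each edge alive over a single interval of frames and produces the claimed unimodal frame-size profile bottoming out at $\frac{3}{2}\ell + 1$.
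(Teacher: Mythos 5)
Your plan is the paper's plan: realize the caterpillar of \cref{fig:notmaximal} as a crossing graph, bound $\mu_F(G)\le\ell+2$ for every maximal initial frame via \cref{cor:bound}, and exhibit a story with a non-maximal initial frame achieving $\frac{3}{2}\ell+1>\ell+2$. However, the decisive step --- your (ii) --- is only announced, not carried out, and your structural description of the example (``$\ell$ copies of a small gadget hung along a spine'') does not match the actual graph, so the claim that maximal independent sets fall into ``a constant number of global cases'' is an assertion rather than an argument. In the paper's caterpillar, a center $r$ is adjacent to $u$, $v$ and to $\ell$ leaves $w_1,\dots,w_\ell$, and $u$, $v$ each carry $\ell$ pendant leaves $u_1,\dots,u_\ell$ and $v_1,\dots,v_\ell$. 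The enumeration then becomes immediate, because each leaf's unique neighbor lies in $\{r,u,v\}$, so a maximal independent set $I$ is completely determined by $I\cap\{r,u,v\}$: if $r\in I$ then $I=\{r,u_1,\dots,u_\ell,v_1,\dots,v_\ell\}$ and the complement contains an independent set of size only $\ell+2$; if $r\notin I$ then all $w_i\in I$ and, for each of $u,v$, either the vertex itself or all of its pendant leaves are forced into $I$, which gives, up to the $u$/$v$ symmetry, exactly the three remaining cases of \cref{fig:notmaximal:1,fig:notmaximal:2,fig:notmaximal:3}, each with $\min\{|I|,|I'|\}\le\ell+2$. This forcing observation is the missing idea that turns your ``careful case analysis'' into a two-line argument; without it, step (ii) is open.

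The bookkeeping in your step (i) is routine and matches the paper: starting from the non-maximal independent set $\{u_1,\dots,u_\ell,v,w_1,\dots,w_{\ell/2}\}$ and inserting $v_1,\dots,v_\ell,r,w_{\ell/2+1},\dots,w_\ell,u$ yields the size profile $\frac{3}{2}\ell+1,\dots,\frac{5}{2}\ell,\,2\ell+1,\dots,\frac{5}{2}\ell,\,\frac{3}{2}\ell+1$; note that this is not ``increase then decrease'' as you assert (it dips when $r$ enters and rises again), but it never falls below $\frac{3}{2}\ell+1$, which is all you need. Also, there is no need to ``fix a target last frame'': the final frame is determined by the insertion order. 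With the forcing argument supplied and the profile checked, your concluding logic (any story whose initial frame is maximal has $\mu(\sigma)\le\ell+2<\frac{3}{2}\ell+1\le\mu(G)$ for even $\ell\ge4$) is exactly the paper's.
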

\begin{proof}
    Let $\ell \geq 4$ be an even integer.
    Consider the crossing graph $X$ with $3\ell+3$ vertices indicated in \cref{fig:notmaximal:graph}.  There is a vertex $r$ with $\ell+2$ neighbors $u$, $v$, $w_1,\dots,w_\ell$. The vertices $u$ and $v$ have additional $\ell$ neighbors $u_1,\dots,u_\ell$, and $v_1,\dots,v_\ell$, respectively. Since $X$ is a caterpillar, it is a segment intersection graph and, thus, indeed, a crossing graph of some geometric graph. There is exactly one maximal independent set $I(\{r\})$ containing $r$, see \cref{fig:notmaximal:0}. For each subset $U \subseteq \{u,v\}$,  there is exactly one maximal independent set $I(U)$ of $X$ that contains $U$, but not $r$. 
	Up to symmetry, this yields the three cases indicated with red vertices in \cref{fig:notmaximal:1,fig:notmaximal:2,fig:notmaximal:3}.  Let $F$ be the subgraph of $G$ that corresponds to $I(U)$ for $U=\{r\}$ or $U \subseteq \{u,v\} $ and let $I'(U)$ be the maximum independent set contained in $V(X) \setminus I(U)$. By \cref{cor:bound}, $\mu_{F}(G) \leq \min\{|I(U)|,|I'(U)|\} \leq \ell+2$. On the other hand, if we start with the initial frame $\{u_1,\dots,u_l,v,w_1,\dots,w_{\frac \ell 2}\}$ and add the vertices in the order $v_1,\dots,v_\ell,r,w_{\frac \ell 2+1},\dots,w_\ell,u$, we obtain a planar story $\sigma$ with the frame sizes
	\[
	\frac 32 \ell+1, \frac 32 \ell+1, \dots, \frac 52 \ell, 2\ell + 1, 2\ell+1, \dots,  \frac 52 \ell,  \frac 32 \ell+1,
	\]
	and, thus $\mu(\sigma)=\frac 32 \ell+1$, implying $\mu(G) \geq \frac 32 \ell+1 > \ell+2$. See  \cref{fig:notmaximal:better}.
\end{proof}

\npcomplete*
\label{thm:np-complete*}

\begin{proof}
Problem \textsc{MaxMinFramePlanarStoryD} trivially belongs to NP, since one can non-deterministically find a set of edges $E_1 \subseteq E(G)$ and, in the case $F_1=(V(G), E_1)$ is planar, which can be deterministically tested in polynomial time, one can non-deterministically produce an ordering $e_{1}, e_{2}, \dots e_{|E(G) \setminus F_1|}$ of the edges in $E(G) \setminus E_1$. Then, one can deterministically test in polynomial time whether, for all $i$ such that $1 \leq i \leq \tau$, it holds that $|E(F_i)| \geq m$, where, for $2 \leq j \leq \tau$, $F_j$ is obtained by adding $e_j$ to $F_{j-1}$ and removing the edges of $F_{j-1}$ that intersect $e_j$.

To prove the hardness of \textsc{MaxMinFramePla\-narStoryD($G$,$m$)} we apply a reduction from  
\textsc{PlanarMaximumIndependentSetD($H$,$k$)}, i.e., the problem of deciding whether a planar graph $H$ admits an independent set of size at least $k$; this problem is known to be NP-hard~\cite[Theorem 4.1]{m-fcgpa-01}. The reduction is as follows. Define the graph $X$ as the union of $H$ and a disjoint star graph $S$ with $k+1$ leaves (see \cref{fig:npCompleteness}). Since $X$ is planar, it is a segment intersection graph~\cite{chalopinG:stoc09}, i.e., it is a crossing graph of some geometric graph; let $G$ be such a graph. Also, let $m=k+1$. We show that $\mu(G)\geq k+1$ if and only if $H$ contains an independent set of size at least $k$.

Assume first that $H$ contains an independent set $I$ of size at least $k$. Consider the following planar story. The initial frame contains the edges corresponding to $I$ and the center of the star $S$. The size of the initial frame is $k+1$. For the subsequent frames, we add the edges corresponding to the leaves of $S$, in any arbitrary order. This increases the size of the frame up to $2k+1$. Finally, we add in arbitrary order all edges corresponding to vertices of $H$ that were not in $I$. During this process, the edges corresponding to the leaves of $S$ will never be removed. Thus, the size of each frame is at least $k+1$.

Assume now that there is a planar story $\sigma$ of $G$ with $\mu(\sigma) \geq k+1$. We show that $H$ contains an independent set of size at least $k$. Let $F$ be the frame of $\sigma$ that contains the edge corresponding to the center of $S$. In this frame, there is no edge corresponding to a leaf of $S$. This implies that frame $F$ must contain at least $k$ non-intersecting edges that correspond to at least $k$ independent vertices of $H$.
\end{proof}

\begin{theorem}\label{thm:balancedHard}
    Given a graph $X$ and a non-negative integer $m$,
    it is NP-complete to decide whether $G$ contains two disjoint independent sets $I_1$ and $I_2$ such that 
    $\min \{|I_1|,|I_2|\} \geq m$.
\end{theorem}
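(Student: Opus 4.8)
The plan is to prove NP-completeness of deciding whether a graph $X$ contains two disjoint independent sets $I_1, I_2$ with $\min\{|I_1|,|I_2|\}\geq m$. Membership in NP is immediate: a certificate consists of the two vertex subsets, and independence, disjointness, and the size bound are all checkable in polynomial time. The interesting direction is hardness, and I would reduce from the classical \textsc{IndependentSet} problem: given a graph $H$ and integer $k$, decide whether $H$ has an independent set of size at least $k$.

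The construction I would use: take $X$ to be the disjoint union of $H$ and a second, auxiliary gadget whose purpose is to ``donate'' a guaranteed independent set of size $k$ to one side, so that the balanced requirement collapses to finding one independent set of size $k$ in $H$. The cleanest choice is to let $X = H \mathbin{\dot\cup} \overline{K_k}$, i.e., add $k$ isolated vertices, and set $m = k$. If $H$ has an independent set $I$ of size $\geq k$, put $I_1 = I$ (truncated to size $k$) and $I_2 =$ the $k$ isolated vertices; these are disjoint, both independent, both of size $k$, so $\min\geq m$. Conversely, if $X$ has disjoint independent sets $I_1, I_2$ with $\min\{|I_1|,|I_2|\}\geq k$, then at least one of them, say $I_1$, contains at most $k$ of the isolated vertices (since there are only $k$ of them and they are split between the two disjoint sets, one side gets $\leq k/2$... actually more carefully: $|I_1\cap\{\text{isolated}\}| + |I_2\cap\{\text{isolated}\}|\leq k$, so $\min$ of these two is $\leq k/2 < k$ for $k\geq 1$), hence that set has at least $k - k/2 \geq k/2$ vertices in $H$ — but that only gives $k/2$, not $k$. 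So the plain ``add $k$ isolated vertices'' gadget does not quite work; I would instead add a disjoint clique-free donor that forces one whole side into $H$.

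The fix is to make the donor side a single large independent set that cannot be beaten: let $X = H \mathbin{\dot\cup} \overline{K_N}$ where $N = |V(H)| + k$ is larger than anything achievable inside $H$, and ask for $\min\{|I_1|,|I_2|\}\geq m := k$. Now in any solution, the side, say $I_2$, that contains the majority (more than half) of the $N$ donor vertices contributes more than $N/2 \geq k$ on its own once $N \geq 2k$, so that side is automatically large; the constraint $|I_1|\geq k$ then forces $I_1$ to use at most $N - (\text{the majority side's donor count})$ donor vertices. To really pin $I_1$ into $H$, I would instead take a cleaner route: use the same idea as in the proof of \cref{thm:np-complete} — attach a star. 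Let $X$ be the disjoint union of $H$ with a star $S$ on $k+1$ leaves, and set $m = k+1$. If $H$ has an independent set $I$ of size $k$: take $I_1 = I \cup \{\text{center of }S\}$ (size $k+1$) and $I_2 = \{\text{the }k+1\text{ leaves of }S\}$ (size $k+1$); disjoint, independent, $\min = k+1 = m$. Conversely, if $\min\{|I_1|,|I_2|\}\geq k+1$, the set containing the center of $S$ — say $I_1$ — contains no leaf of $S$, so all but one of its $\geq k+1$ vertices lie in $H$ and form an independent set of size $\geq k$ there. This is exactly the argument already carried out in \cref{thm:np-complete}, now repurposed, and it is clean because the star's center-versus-leaves dichotomy does the forcing for free.

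The main obstacle, and the reason the naive ``isolated vertices'' gadget fails, is the balancing: the hardness must survive the fact that the solver is free to distribute the donor vertices between $I_1$ and $I_2$ however it likes. The star gadget resolves this because its internal independence structure (center excludes all leaves) prevents the donor's contribution from being split arbitrarily — whichever side takes the center is forced to draw its remaining vertices from $H$. I would therefore write the final proof around the star reduction, verifying planarity is \emph{not} needed here (unlike \cref{thm:np-complete}, general \textsc{IndependentSet} suffices), double-checking the size bookkeeping in both directions, and noting that the reduction is clearly polynomial.
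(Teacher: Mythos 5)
Your NP-membership argument is fine, but the hardness reduction has a genuine gap: in the target problem nothing forces either of the two disjoint independent sets to contain the center of the star. In \cref{thm:np-complete} the center is forced to show up because a planar story must display \emph{every} edge in some frame, so some frame contains the center and excludes all leaves; in the plain ``two disjoint independent sets'' problem the solver may simply ignore the center, split the $k+1$ leaves between $I_1$ and $I_2$, and top each side up with two medium-sized disjoint independent sets of $H$. Concretely, take $H=K_{k-1,k-1}$ with parts $A,B$ and $k\geq 3$: then $\alpha(H)=k-1<k$, so $(H,k)$ is a no-instance of \textsc{IndependentSet}, yet in $X=H\,\dot\cup\,S$ the sets $I_1=A\cup\{\ell_1,\ell_2\}$ and $I_2=B\cup\{\ell_3,\ell_4\}$ (where $\ell_1,\dots,\ell_4$ are distinct leaves of $S$) are disjoint independent sets of size $k+1=m$ each. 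So your reduction maps no-instances to yes-instances; the star gadget does \emph{not} resolve the splitting problem you correctly identified for the isolated-vertex gadget, because the leaves of the star are themselves an independent set that can be divided between the two sides. The same splitting issue defeats other natural disjoint-union reductions from \textsc{IndependentSet} (e.g.\ two copies of $H$), since what such constructions really control is the maximum total size of two disjoint independent sets of $H$, not $\alpha(H)$.

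The paper avoids this by reducing from NAE~3SAT instead: it builds a graph out of clause triangles and even variable cycles, and the cardinality target $4p$ (with $p$ clauses) forces, by counting, that each clause triangle contributes exactly one vertex to each of $I_1$ and $I_2$ and that each variable cycle is split into its two alternating halves, one per set. That counting argument is the structural substitute for the ``center must appear'' forcing you tried to import from \cref{thm:np-complete}; to repair your proof you would need a gadget with a comparable rigidity property under adversarial splitting, not a disjoint star.
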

\begin{proof}
	Containment in NP is obvious. 
	In order to prove hardness, we reduce from NAE 3SAT. A
	\emph{clause} is a set of \emph{literals}, i.e., a set of variables (\emph{positive literals}) and negated variables (\emph{negative literals}). 
	Given a set of clauses, a
	\emph{NAE truth assignment} is a \emph{truth assignment}, i.e., an assignment of true and false to the variables, such that each clause contains at least one true and at least one false literal. The problem NAE 3SAT takes as input a 3SAT formula, i.e., a set of clauses each containing three literals, and asks whether there is a NAE truth assignment. Observe that NAE 3SAT is NP-complete in general \cite{darmann/doecker:20},
    but polynomial-time solvable if the variable-clause graph is planar \cite{moret:88}.

\begin{figure}[htbp]
    \centering
    \includegraphics[page=5]{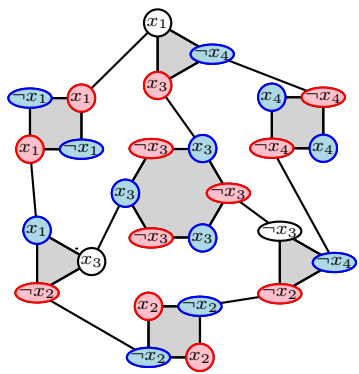}
    \caption{
        Graph $G_I$ for the 3-SAT formula 
        with $p=3$ clauses $\{x_1,x_3,\neg x_4\}$, $\{x_1, x_3,\neg x_2\}$, and $\{\neg x_2, \neg x_3,\neg x_4\}$. The red vertices and the blue vertices, respectively, form an independent set of size $4p$ corresponding to the NAE truth assignment $x_1,x_2 \rightarrow \textsc{false}$, 
        $x_3,x_4 \rightarrow \textsc{true}$.
    }
    \label{fig:balancedHard}
\end{figure}

	Consider now an instance $I=(\mathcal X,\mathcal C)$ of 3SAT with $p$ clauses. We construct a graph $G_I$ such that $I$ is a yes instance of NAE 3SAT if and only if $G_I$ contains two disjoint independent sets, each of size at least $4p$. See \cref{fig:balancedHard}. For each clause $c$, graph $G_I$\label{def:gi} contains a \emph{clause cycle} of three vertices, each labeled with one of the literals of $c$. 
    For each variable $x$ there is a \emph{variable cycle} $C_x:v_1,\dots,v_{2d}$ of length $2d$, where $d$ is the number of clauses that contain a literal $x$ or $\neg x$. For $i=1,\dots,d$, we label $v_{2i-1}$ with $x$ and $v_{2i}$ with~$\neg x$. Let $c_j$, $j=1,\dots,k$ be the clauses containing a literal $x$ or~$\neg x$ and let $w_j$ be the vertex of $C_{c_j}$ labeled with $x$ or~$\neg x$. We connect $w_j$ to $v_{2j-1}$ or $v_{2j}$ such that $w_j$ and its neighbor in $C_x$ have the same~label.
	
	Assume first that there is an NAE truth assignment for $I=(\mathcal X,\mathcal C)$. Let $I_1$ be the independent set that contains all negative literals from the variable cycles and one positive literal from each clause cycle. Analogously, let $I_2$ be the independent set that contains all positive literals from the variable cycles and one negative literal from each clause cycle. Both, $I_1$ and $I_2$ have exactly $4p$ vertices. 
	
    Assume now that $G_I$ contains two disjoint independent sets $I_1$ and $I_2$, each of size at least $4p$. Observe that an independent set contains at most every second vertex out of any cycle. In particular, it contains at most $3p$ vertices out of the $6p$ vertices in all variable cycles and at most one vertex out of each clause triangle. Thus, in order to reach $4p$ vertices, both $I_1$ and $I_2$ contain exactly one vertex out of each clause triangle and exactly every second vertex out of each variable cycle. More precisely, for each variable $x$, the set $I_1$ either contains all positive or all negative literals of $x$ in the variable cycle $C_x$ of $x$, and $I_2$ contains the other literals in $C_x$. We set a variable to true if the positive literals are in $I_2$ and false otherwise; i.e., all vertices of a variable cycle that are in $I_2$ (resp. $I_1$) correspond to a true (resp. false) literal. This yields an NAE truth assignment: Each clause cycle contains a vertex $v_1 \in I_1$ and a vertex $v_2 \in I_2$. Since the neighbor of $v_1$ in the variable cycle cannot be in $I_1$, we have that $v_1$ corresponds to a true literal. Analogously, $v_2$ corresponds to a false literal.
\end{proof}

\maximumPairFPT* \label{thm:maximumPairFPT*}
\begin{proof}
    The proof follows the ideas of the FPT approach for constructing a minimum vertex cover proposed by Niedermeier~\cite{niedermeier_book},  enhancing it with the concept of Pareto optimal pairs.
    Let~$\mathcal T$ be a tree decomposition of width~$\omega$ of a graph~$X$  with $n$ vertices.
    We root $\mathcal T$ at an arbitrary vertex. We use dynamic programming. 
    For a vertex~$\nu$ of~$\mathcal T$, let $\mathcal T(\nu)$ be the subtree of $\mathcal T$ rooted at~$\nu$, let $X(\nu)$ be the subgraph of~$X$ induced by the vertices in the bags of~$\mathcal T(\nu)$, and let~$n(\nu)$ be the number of vertices of~$X(\nu)$. For each vertex~$\nu$ of~$\mathcal T$ and for each map $c:V(\nu)\rightarrow \{$red, blue, white$\}$, let $I(\nu,c)$ be the set of pairs~$(\alpha,\beta)$ for which $X(\nu)$ contains two disjoint independent sets $I_1$ and $I_2$ of size~$\alpha$ and~$\beta$, respectively, such that the red vertices of~$V(\nu)$ are contained in~$I_1$, the blue vertices of~$V(\nu)$ in~$I_2$, and the white vertices of $V(\nu)$ are neither in $I_1$ nor~$I_2$. Since a subset of an independent set is an independent set, it suffices to store the \emph{Pareto optimal} pairs, i.e., the pairs 
    $(\alpha,\beta) \in I(\nu,c)$ for which there is no pair $(\alpha',\beta') \in I(\nu,c)$ with $\alpha \leq \alpha'$ and $\beta < \beta'$ or $\alpha < \alpha'$ and $\beta \leq \beta'$. Let $L(\nu,c)$ be the list of Pareto-optimal pairs from $I(\nu,c)$. Observe that $|L(\nu,c)| \in \mathcal O(n(\nu))$ and that the number of functions $c:V(\nu)\rightarrow \{$red, blue, white$\}$ is in~$\mathcal O(3^\omega)$.

    We traverse $\mathcal T$ bottom-up.
    Let $\nu$ be a vertex of $\mathcal T$ and let $c:V(\nu)\rightarrow \{$red, blue, white$\}$. If the red and blue vertices of $V(\nu)$, respectively, form an independent set of $X$, we initialize $L(\nu,c)$ with $(|\{v \in V(\nu);\; c(v) = \textup{red}\}|,|\{v \in V(\nu);\; c(v) = \textup{blue}\}|)$. Otherwise, $L(\nu,c)=\emptyset$. The independence test can be done in $\mathcal O(\omega^2)$ time.
    
    Let $\mu_1,\dots,\mu_\ell$ be the children of $\nu$, if any. 
    We iteratively add $\mathcal T(\mu_i)$, $i=1,\dots,\ell$ by combining 
    the current list $L(\nu,c)$ and the lists $L(\mu_i,c_i)$ for all colorings $c':V(\mu_i)\rightarrow \{$red, blue, white$\}$ that are \emph{compatible} with $c$, i.e., that fulfill $c'(v)=c(v)$ for $v \in V(\nu) \cap V(\mu_i)$: Let $\alpha_i = |\{v \in V(\nu) \cap V(\mu_i);\; c(v) = \textup{ red}\}|$ and $\beta_i = |\{v \in V(\nu) \cap V(\mu_i);\; c(v) = \textup{ blue}\}|$.
     For each  $(\alpha,\beta) \in L(\nu,c)$, each coloring $c':V(\mu_i)\rightarrow \{$red, blue, white$\}$ that is compatible with $c$, and each $(\alpha',\beta') \in L(\mu_i,c')$, we add $(\alpha+\alpha'-\alpha_i,\beta+\beta'-\beta_i)$ to some initially empty list~$L_i$. After having computed these potentially $\mathcal O(n^2)$ pairs for $\nu$, $\mu_i$ and $c$, we replace the current list $L(\nu,c)$ with the  Pareto-optimal entries of $L_i$.

    In order to have direct access to the lists $L(\mu,c')$ corresponding to the colorings $c'$ that are compatible with a coloring $c$ of the parent $\nu$ of $\mu$, we store the lists $L(\mu,\cdot)$ in an array~$A(\mu)$. To this end, consider an ordering $v_0,\dots,v_\omega$ of the elements in~$V(\mu)$ such that the elements of $V(\nu) \cap V(\mu)$ are last. The entry in~$A(\mu)$ with position $\sum_{i=0}^\omega \gamma_i 3^i$ then contains the list~$L(\mu,c')$ where $c'$ colors~$v_i$ white if $\gamma_i = 0$, red if $\gamma_i=1$, and blue if~$\gamma_i=2$.
    Now, each iteration can be computed in $\mathcal O(n^2)$ time and we need $\deg \nu$ steps in order to compute the final list $L(\nu,c)$. By the hand-shaking lemma, the run time for computing the lists $L(\nu,c)$ for all vertices $\nu$ of $\mathcal T$ and for all colorings~$c$ of the bags is in $\mathcal O(3^\omega \cdot \sum_{\nu \in V(\mathcal T)} (\omega^2+\omega \cdot n^2 \cdot \deg \nu)) \subseteq \mathcal O(3^\omega\omega^2n^3)$.
        
    In the end,  it suffices to go through the list of Pareto-optimal elements of the root and to extract the entry where $\min \{\alpha,\beta\}$ is maximum. Tracing back the way this solution was obtained, we can also derive the respective pair of independent sets.
\end{proof}

\subsection{3-Plane and 2-Plane Geometric Graphs}\label{app:3-planar}

We use the following lemma in order to prove that \textsc{MaxMinFramePlanarStory($G$)} can be solved in polynomial time if $G$ is 3-plane and the crossing graph is cycle-free. 

\heurOptCases* \label{thm:heurOptCases*}

\begin{proof}
	We formulate the proof in terms of the crossing graph $X$. 
	 We say that a vertex is red if it is contained in $I_1$, blue if it is contained in $I_{\tau}$,  and white if it is neither in $I_1$ nor in $I_\tau$.  Let $\sigma = \langle I_1,\dots,I_\tau\rangle$ be the story computed by the  advanced greedy heuristic. For $i=2,\dots,\tau$, let $v_i$ be the only vertex in $I_i \setminus I_{i-1}$. We say that a vertex is $i$-current or $i$-future, respectively, if it was current or future  at the time when we choose $v_i$ for $i=2,\dots,\tau$. Observe that the vertices in $I_{i-1}$ are the $i$-current vertices and the vertices in $(I_i \cup \dots \cup I_\tau) \setminus I_{i-1}$ are the $i$-future vertices.  Let $X_i$ be the subgraph of $X$ induced by the vertices in $I_{i-1} \cup \dots \cup I_\tau$. The $i$-current degree of a vertex $v$ of $X_i$  is the number of neighbors of $v$ in $I_{i-1}$. 
	
	Let $t$ be the lowest index among $2,\dots,\tau$ such that the $t$-current degree of $v_t$ is greater than one. We have $|I_i| \geq |I_1|$ for $i=1,\dots,t-1$.  	
	\begin{claim}
		There is no index $j$ with $t \leq j \leq \tau$ such that the $j$-current degree of $v_j$ is zero. 
		\end{claim}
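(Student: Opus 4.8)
The plan is to prove the claim by contradiction, combining it with the proof of \cref{thm:heurOptCases} that is being built around it. Recall the setup: $t$ is the first index where the chosen vertex $v_t$ has current degree at least $2$, and up to step $t-1$ the frame sizes are non-decreasing, so $|I_i| \ge |I_1|$ for $i \le t-1$. The key structural facts available are that $X$ is a forest and that every vertex outside $I_1 \cup I_\tau$ (i.e., every white vertex) has degree at most three in $X$.

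First I would observe that at the moment $v_t$ is chosen, the advanced greedy rule picks an \emph{admissible} vertex of \emph{minimum} current degree. So if $v_t$ has current degree $\ge 2$, then \emph{every} admissible future vertex has current degree $\ge 2$ at that step; in particular no admissible future vertex currently has a neighbor-free (degree-zero) or degree-one situation. I would then argue about what the remaining future graph $X_t$ (induced on $I_{t-1} \cup \dots \cup I_\tau$) looks like: since $X$ is a forest, $X_t$ is a forest, and the current vertices $I_{t-1}$ form an independent set in it. The crux is to understand which future vertices can still be added and in what order. Because blue vertices (those in $I_\tau$) that are crossed by another future vertex are \emph{not} admissible, the admissible future vertices are exactly the white future vertices together with the blue future vertices whose only remaining future neighbors have already been placed.

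Next, toward a contradiction, suppose some $j$ with $t \le j \le \tau$ has $v_j$ of current degree zero. Take the smallest such $j$. I would track the frame sizes from step $t$ onward: each time we add a vertex of current degree $d$, the frame size changes by $1-d$. From step $t$ to step $j-1$ every added vertex has current degree $\ge 1$ (by minimality of $j$ and the fact that degree-zero additions are the only ones that strictly increase the frame), so the frame size is non-increasing on $[t-1, j-1]$ and strictly drops at step $t$ (degree $\ge 2$). Hence $|I_{j-1}| \le |I_{t-1}| - 1$. Adding $v_j$ of current degree zero raises it back by one, so $|I_j| \le |I_{t-1}|$. The goal is to show this forces the final frame to dip below $\min\{|I_1|,|I_\tau|\}$, contradicting \cref{cor:pair} together with the assumption that greedy on a Pareto-optimal pair should achieve the optimum in this case — but more directly, I would show the degree-zero vertex $v_j$ could instead have been added \emph{earlier}, before step $t$ or right at step $t$, because a vertex that is $j$-current-degree zero was already future at step $t$ and, being non-adjacent to anything in $I_{j-1} \supseteq$ (its eventual current neighbors), its set of current neighbors only shrank over time; if it had current degree zero at step $j$ it would have had current degree at most its degree in $X$ earlier, and I would use the degree-$\le 3$ bound and the forest structure to show it was in fact admissible with current degree $\le 1$ at step $t$ — contradicting that $v_t$, with current degree $\ge 2$, was a minimum-current-degree admissible choice.

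The main obstacle I anticipate is the bookkeeping around \emph{admissibility}: a degree-zero-at-step-$j$ vertex $v_j$ might be blue and might have been inadmissible at step $t$ precisely because a future neighbor had not yet entered, so the naive "it had low degree earlier, hence should have been chosen earlier" argument needs care. I would handle this by a careful case analysis on the color of $v_j$ and on where its neighbors in the tree $X$ lie (red / blue / white, current / future), leaning on the fact that $X$ is acyclic so the neighborhood of $v_j$ is a small independent-in-$X_t$ set, and on the degree-$\le 3$ restriction for white vertices to bound the number of "blocking" future neighbors. The combinatorics is finite but fiddly; that is where the real work sits, and it is exactly the step I'd expect the authors' full proof (\restateref{thm:heurOptCases}) to spend most of its effort on.
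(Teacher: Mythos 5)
There is a genuine gap, and it sits exactly at the step your contradiction hinges on. You argue that since $v_j$ has $j$-current degree zero, "its set of current neighbors only shrank over time," so it must already have been a low-degree (hence better) admissible candidate at step $t$, contradicting the greedy choice of $v_t$. This monotonicity premise is false: the current degree of a fixed future vertex is not monotone. It increases whenever one of its future neighbors is inserted, and it decreases whenever one of its current neighbors is removed because that neighbor is crossed by a newly inserted edge. In particular, $v_j$ may well have had current degree $2$ or more at step $t$ (so it was not a better candidate than $v_t$) and only later dropped to degree zero because the insertions $v_t,\dots,v_{j-1}$ wiped out its current neighbors — this deletion mechanism is precisely the phenomenon the claim has to exclude, so it cannot be assumed away. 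Your fallback, deducing a contradiction from "greedy on a Pareto-optimal pair should achieve the optimum" via \cref{cor:pair}, is circular, since that is the statement \cref{thm:heurOptCases} is trying to establish.

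The paper's proof goes a different and essentially unavoidable route: it strengthens the claim into an invariant that is proved by induction on $j$, namely that in every connected component of the graph induced by the current and future vertices, no admissible future vertex has current degree zero \emph{and} at most one admissible future vertex has current degree one. The bare claim is not inductively closed — you need the "at most one degree-one vertex per component" part to argue that when the insertion of $v_{j-1}$ deletes current vertices, the (unique, by acyclicity of $X$) affected future vertex in each newly created component cannot fall to degree zero. The acyclicity of the crossing graph is used exactly there (each deletion touches one vertex per resulting component), and the degree-at-most-three bound on white vertices is used to control which vertices can become newly admissible (a white vertex has one red, one blue, and at most one further neighbor, forcing adjacent white vertices and doubly-blue-adjacent white vertices to be picked before step $t$). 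Your proposal correctly anticipates that admissibility bookkeeping and the forest/degree structure are where the work lies, but without the strengthened per-component invariant the case analysis you sketch does not close, and the central "it should have been chosen earlier" step fails as stated.
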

	The claim implies that from $I_{t-1}$ on the frame sizes can only decrease or stay the same until we reach the final frame $I_\tau$. This implies that $|I_j| \geq |I_\tau|$ for $j=t-1,\dots,\tau$. 	
		
	It remains to prove the claim.  To this end
	let $H$ be a connected component of $X_j$. 
	We will show by induction on $j=t,\dots,\tau$ that 
	\begin{inparaenum}[(a)]
		\item
		each admissible $j$-future vertex of $H$ has $j$-current degree zero (which proves the claim) and
		\item 
		at most one admissible $j$-future vertex of $H$ has $j$-current degree one.
	\end{inparaenum}
		
	If $j=t$ then every admissible $j$-future vertex has at least two $j$-current neighbors.
	Let now $j>t$. Note that $X_j$ is obtained from $X_{j-1}$ by removing the $(j-1)$-current neighbors of $v_{j-1}$. Let $H'$ be the connected component of $X_{j-1}$ that contains $H$. 
	By the inductive hypothesis, no admissible $(j-1)$-future vertex of $H$ has $(j-1)$-current degree zero and
	at most one admissible $(j-1)$-future vertex of $H$ has $(j-1)$-current degree one.
	If $H = H'$ we are done, as in this case the $j$-current vertices and the $(j-1)$-current vertices coincide on $H'$.
	
	So assume that $v_{j-1}$ is a vertex of $H'$. Then $H$ is a connected component of the graph  obtained from $H'$ by removing the $(j-1)$-current neighbors of $v_{j-1}$. If $H$ does not contain any $j$-future vertices, there is nothing to show. So assume that $H$ contains some $j$-future vertex. In particular $H$ does not only contain $v_{j-1}$.
	Consider first the case that $v_{j-1}$ is a blue vertex. Then all neighbors of $v_{j-1}$ are $(j-1)$-current and $H$ contains exactly one vertex~$v$ that was a neighbor of some neighbor $w$ of $v_{j-1}$. No vertex of $H$ became newly admissible and the $j$-current degree equals the  $(j-1)$-current degree of  any vertex of~$H$, except for $v$.
	
	Observe that no admissible vertex of $H$ has $(j-1)$-current degree one: If $v_{j-1}$ had $(j-1)$-current degree one in $X_{j-1}$, then, by the inductive hypothesis, it was the only admissible vertex  of $H'$ that had $(j-1)$-current degree one. If $v_{j-1}$ had $(j-1)$-current degree greater than one, then, by the greedy strategy, all admissible $(j-1)$-future vertices had $(j-1)$-current degree greater than one. Since the $j$-current degree of $v$ equals its $(j-1)$-current degree minus one, 
	this implies that if $v$ was admissible then the $j$-current degree of $v$ is not zero and that $v$ is the only admissible $j$-future vertex of  $j$-current degree 1 of $H$, if any.
		
	If $v_{j-1}$ was a white vertex then $v_{j-1}$ has exactly one blue neighbor $v^b$ and all other neighbors of $v_{j-1}$ are $(j-1)$-current: Since every white vertex has at least one blue and at least one red neighbor, it follows  that the white vertices that are adjacent to another white vertex initially have current degree one.  Thus, among any two adjacent white vertices, at least one was chosen before $v_t$. Moreover, any white vertex in $X_j$ with two blue neighbors was also chosen before $v_t$.

    If $H$ is the component containing $v_{j-1}$ then $H$ contains at least two vertices, namely $v_{j-1}$ and $v^b$. Observe that $v_{j-1}$ is not a $j$-future vertex. However, $v^b$ might have become an admissible $j$-future vertex. If $v^b$ is admissible and has $j$-current degree one then $v^b$ and $v_{j-1}$ are the only vertices of $H$, and, thus, $v^b$ is the only vertex of $H$ with $j$-current degree one.
	
	If $H$ is not the component containing $v_{j-1}$, then there is again exactly one vertex $v$ that was a neighbor of a $(j-1)$-current neighbor $w$ of $v_{j-1}$.
	By the same argument as above,  no admissible vertex of $H$ has $(j-1)$-current degree one, and, thus, $v$ has $j$-current degree greater one and $v$ is the only admissible $j$-future vertex of  $j$-current degree 1 of $H$, if any.
\end{proof}

\TwoandThreeplanar* \label{thm:2and3planar*}
\begin{proof}
    For a 3-plane geometric  graph the statement follows immediately from \cref{thm:heurOptCases}.  
    So, let $X$ be the crossing graph of a 2-plane geometric graph. The degree of each vertex in $X$ is at most two. Thus, each connected component of $X$ is either a path or a simple cycle. 

    A path or a cycle of $X$ is \emph{odd} if it contains an odd number of vertices, otherwise it is \emph{even}. A maximum pair $I_1,I_\tau$ of independent sets looks as indicated in \cref{fig:2planar}.
    Divide the set of odd paths into two sets $\mathcal O_1$ and $\mathcal O_2$ such that $0 \leq |\mathcal O_2| - |\mathcal O_1| \leq 1$.
    See \cref{fig:2planar_oddPath2} for the paths in $\mathcal O_1$ and \cref{fig:2planar_oddPath1} for the paths in $\mathcal O_2$.
    The initial frame $I_1$ contains
    $(i)$ a maximum independent set for each cycle and for each path not in $\mathcal O_2$; $(ii)$ for each path in $\mathcal O_2$ all the vertices that are not in the maximum independent set (i.e., every second vertex of the path, omitting the first and the last vertex). See the red vertices in \cref{fig:2planar}. The vertices in $I_\tau$ (blue vertices in \cref{fig:2planar}) are all vertices that are not red, except for one vertex in each odd~cycle.

    \textsf{Advanced Greedy 1a}  first adds all vertices of current degree one or zero. These are all non-red vertices in the odd cycles, the even paths, and the odd paths in $ \mathcal O_2$. No step will decrease the size of a  frame, and the addition of the final vertex of a path in $ \mathcal O_2$ increases the size of the frame by one. Finally,  \textsf{Advanced Greedy 1a} adds the vertices of the even cycles and the odd paths in $\mathcal O_1$. Moreover, after a first vertex of an even cycle or an odd path in $\mathcal O_1$ is picked, all other vertices of that component have current degree at most one and are chosen immediately afterwards. The first vertex of each such cycle or path decreases the number of edges in a frame by one. The last vertex of an even cycle increases it again by one. Thus, a cycle only temporarily drops the frame size by one. In the end the frame size is $|I_\tau|$.
    
    Summarizing, if there is at least one odd path or if there is no even-length cycle then the minimum size of a frame is $\min \{|I_1|,|I_\tau|\}$; otherwise it is $\min\{|I_1|,|I_\tau|\}-1$. In either case, this is optimum. Indeed, by \cref{cor:pair}, the minimum frame size cannot be greater than $\min\{|I_1|,|I_\tau|\}$.
    If there is no odd path, then no independent set of $X$ contains more than $|I_1|=|I_\tau|$ vertices. 
    Also, if there is an even cycle $C$ then the frame size drops by one when adding the first vertex of $C$ not in $I_1$.
\end{proof}


\clearpage
\section{Additional Material for \Cref{se:experiments}}\label{se:app-experiments}

\begin{table}[h]
\centering
\begin{tabular}{|>{\raggedright\arraybackslash}p{3cm}|r|r|r|r|r|r|}
\hline
\textbf{graph name} & \textsf{AG-1a2a} & \textsf{AG-1a2b} & \textsf{AG-1b2a} & \textsf{AG-1b2b} & \textsf{AG-1c2a} & \textsf{AG-1c2b} \\
\hline\hline
bwm200 & 0.00 & 0.00 & 0.00 & 0.00 & 0.00 & 0.00 \\
\hline
ca-sandi\_auths & 0.00 & 0.00 & 0.00 & 0.00 & 0.00 & 0.00 \\
\hline
rajat11 & 0.09 & 0.13 & 0.06 & 0.08 & 0.10 & 0.13 \\
\hline
lesmis &  &  & 0.05 & 0.05 & 0.08 & 0.08 \\
\hline
ca-netscience &  &  & 0.61 & 1.00 & 1.00 & 1.00 \\
\hline
gd06\_theory &  &  & 0.07 & 0.07 & 0.07 & 0.07 \\
\hline
road-chesapeake &  &  & 0.04 & 0.04 & 0.05 & 0.05 \\
\hline
insecta-beetle-group-c1-period-1 &  &  & 0.04 & 0.04 & 0.06 & 0.06 \\
\hline
polbooks &  &  & 0.32 & 0.32 & 0.53 & 0.53 \\
\hline
email-enron-only &  &  & 1.05 & 1.05 & 1.64 & 1.64 \\
\hline
eco-stmarks &  &  & 0.32 & 0.32 & 0.40 & 0.40 \\
\hline
adjnoun &  &  & 0.53 & 0.53 & 0.70 & 0.70 \\
\hline
\end{tabular}
\caption{Runtime in seconds for the different heuristics on the {\tt real graphs}. The empty cells refer to the instances that could not be solved by Variant~\textsf{1a} with the given time limit.}
\label{tb:running-time-real-graphs}
\end{table}

\clearpage

\begin{figure}[h!]
		\centering 
       \includegraphics[page=1,width=0.8\columnwidth]{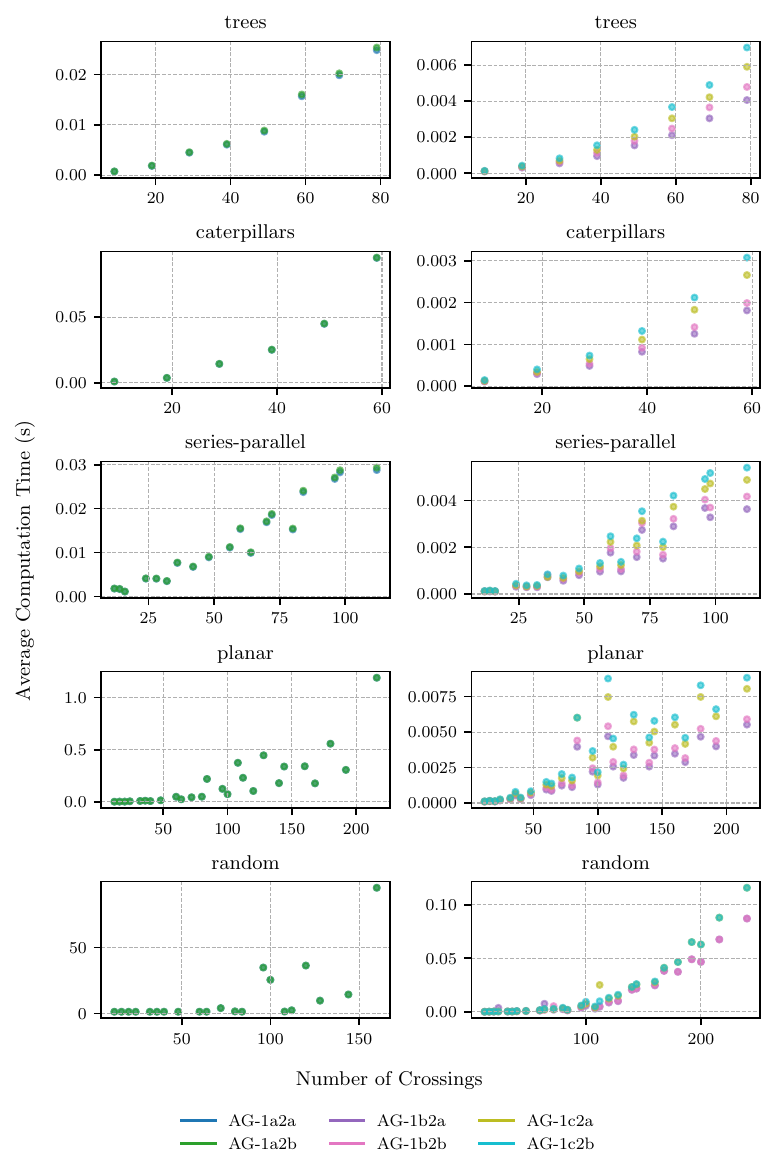} 
	\caption{\label{fig:time_plots} 
    Runtime (in seconds) of the heuristics by increasing number of crossings (edges of the crossing graph). The charts on the left correspond to \textsf{AG-1a2a} and \textsf{AG-1a2b}; the trend lines largely overlap, as the differences between Variants \textsf{2a} and \textsf{2b} are negligible. The charts on the right refer to the remaining heuristics. Due to the scale, a slight difference between Variants \textsf{2a} and \textsf{2b} can be observed—except in the case of {\tt random graphs}.}
\end{figure}

\clearpage

\begin{table}[tb]
  	\centering
  \begin{tabular}{|l || p{0.07\linewidth}| p{0.08\linewidth} | p{0.09\linewidth}| p{0.08\linewidth}|| p{0.07\linewidth}| p{0.08\linewidth} | p{0.09\linewidth}| p{0.08\linewidth}|}
  	\hline
  	& \multicolumn{4}{r||}{\textbf{Caterpillars}}& \multicolumn{4}{r|}{\textbf{General Trees}}\\
  	\hline
  	& \textsf{\% Opt} & \textsf{Avg approx. ratio} & \textsf{Min-Max approx. ratio} & \textsf{SD approx. ratio} & \textsf{\% Opt} & \textsf{Avg approx. ratio} & \textsf{Min-Max approx. ratio} & \textsf{SD approx. ratio} \\
  	\hline
  	\textsf{AG-1a2a} & $97.0\%$ (32/33) & $99.64\%$ & $88.24\%$ \ $100.00\%$ & $2.02\%$ & $100.0\%$ (54/54) & $100.00\%$ & $100.00\%$ \ $100.00\%$ & $0.00\%$ \\
  	\hline
  	\textsf{AG-1a2b} & $97.0\%$ (32/33) & $99.64\%$ & $88.24\%$ \ $100.00\%$ & $2.02\%$ & $100.0\%$ (54/54) & $100.00\%$ & $100.00\%$ \ $100.00\%$ & $0.00\%$ \\
  	\hline
  	\textsf{AG-1b2a} & $3.0\%$ (1/33) & $58.67\%$ & $42.86\%$ \ $100.00\%$ & $12.28\%$ & $0.0\%$ (0/54) & $65.80\%$ & $47.37\%$ \ $88.89\%$ & $8.35\%$ \\
  	\hline
  	\textsf{AG-1b2b} & $3.0\%$ (1/33) & $58.67\%$ & $42.86\%$ \ $100.00\%$ & $12.28\%$ & $0.0\%$ (0/54) & $65.87\%$ & $47.37\%$ \ $88.89\%$ & $8.26\%$ \\
  	\hline
  	\textsf{AG-1c2a} & $18.2\%$ (6/33) & $81.32\%$ & $64.29\%$ \ $100.00\%$ & $10.10\%$ & $14.8\%$ (8/54) & $84.43\%$ & $73.33\%$ \ $100.00\%$ & $7.87\%$ \\
  	\hline
  	\textsf{AG-1c2b} & $18.2\%$ (6/33) & $81.66\%$ & $64.29\%$ \ $100.00\%$ & $10.17\%$ & $14.8\%$ (8/54) & $84.43\%$ & $73.33\%$ \ $100.00\%$ & $7.87\%$ \\
  	\hline
  \end{tabular}
  \centering
  \begin{tabular}{|l || p{0.07\linewidth}| p{0.08\linewidth} | p{0.09\linewidth}| p{0.08\linewidth} || p{0.07\linewidth}| p{0.08\linewidth} | p{0.09\linewidth}| p{0.08\linewidth}|}
  	\hline
  	& \multicolumn{4}{r||}{\textbf{Series-Parallel Graphs}}	& \multicolumn{4}{r|}{\textbf{Planar Graphs}}\\
  	\hline
  	& \textsf{\% Opt}  & \textsf{Avg approx. ratio} & \textsf{Min-Max approx. ratio} & \textsf{SD approx. ratio} & \textsf{\% Opt} & \textsf{Avg approx. ratio} & \textsf{Min-Max approx. ratio} & \textsf{SD approx. ratio} \\
  	\hline
  	\textsf{AG-1a2a} & $26.6\%$ (25/94) & $90.66\%$ & $66.67\%$ \ $100.00\%$ & $9.29\%$  & $40.7\%$ (46/113) & $88.91\%$ & $50.00\%$ \ $100.00\%$ & $13.48\%$ \\
  	\hline
  	\textsf{AG-1a2b} & $33.0\%$ (31/94) & $91.25\%$ & $66.67\%$ \ $100.00\%$ & $9.79\%$& $41.6\%$ (47/113) & $90.06\%$ & $50.00\%$ \ $100.00\%$ & $12.38\%$ \\
  	\hline
  	\textsf{AG-1b2a} & $3.2\%$ (3/94) & $76.17\%$ & $50.00\%$ \ $100.00\%$ & $9.72\%$ & $6.2\%$ (7/113) & $61.36\%$ & $33.33\%$ \ $100.00\%$ & $13.24\%$ \\
  	\hline
  	\textsf{AG-1b2b} & $2.1\%$ (2/94) & $76.30\%$ & $50.00\%$ \ $100.00\%$ & $9.07\%$ & $6.2\%$ (7/113) & $61.93\%$ & $42.86\%$ \ $100.00\%$ & $12.62\%$ \\
  	\hline
  	\textsf{AG-1c2a} & $5.3\%$ (5/94) & $81.41\%$ & $62.50\%$ \ $100.00\%$ & $7.90\%$ & $10.6\%$ (12/113) & $78.99\%$ & $44.44\%$ \ $100.00\%$ & $12.47\%$ \\
  	\hline
  	\textsf{AG-1c2b} & $7.4\%$ (7/94) & $81.58\%$ & $50.00\%$ \ $100.00\%$ & $9.26\%$ & $12.4\%$ (14/113) & $80.73\%$ & $50.00\%$ \ $100.00\%$ & $12.06\%$ \\
  	\hline
  \end{tabular}
  \centering
  \begin{tabular}{|l || p{0.07\linewidth}| p{0.08\linewidth} | p{0.09\linewidth}| p{0.08\linewidth}|| p{0.07\linewidth}| p{0.08\linewidth} | p{0.09\linewidth}| p{0.08\linewidth}|}
  	\hline
  	& \multicolumn{4}{r||}{\textbf{Random Graphs}}& \multicolumn{4}{r|}{\textbf{Real Graphs}}\\
  	\hline
  	& \textsf{\% Opt} & \textsf{Avg approx. ratio} & \textsf{Min-Max approx. ratio} & \textsf{SD approx. ratio}	& \textsf{\% Opt} & \textsf{Avg approx. ratio} & \textsf{Min-Max approx. ratio} & \textsf{SD approx. ratio} \\
  	\hline
  	\textsf{AG-1a2a} & $76.4\%$ (84/110) & $97.83\%$ & $82.61\%$ \ $100.00\%$ & $4.43\%$ & $100.0\%$ (2/2) & $100.00\%$ & $100.00\%$ \ $100.00\%$ & $0.00\%$ \\
  	\hline
  	\textsf{AG-1a2b} & $81.8\%$ (90/110) & $98.30\%$ & $83.33\%$ \ $100.00\%$ & $4.09\%$  & $100.0\%$ (2/2) & $100.00\%$ & $100.00\%$ \ $100.00\%$ & $0.00\%$ \\
  	\hline
  	\textsf{AG-1b2a} & $10.9\%$ (12/110) & $71.32\%$ & $46.67\%$ \ $100.00\%$ & $13.47\%$& $0.0\%$ (0/2) & $71.67\%$ & $60.00\%$ \ $83.33\%$ & $11.67\%$ \\
  	\hline
  	\textsf{AG-1b2b} & $10.9\%$ (12/110) & $71.68\%$ & $46.67\%$ \ $100.00\%$ & $13.30\%$& $100.0\%$ (2/2) & $100.00\%$ & $100.00\%$ \ $100.00\%$ & $0.00\%$ \\
  	\hline
  	\textsf{AG-1c2a} & $22.7\%$ (25/110) & $87.10\%$ & $66.67\%$ \ $100.00\%$ & $8.91\%$ & $100.0\%$ (2/2) & $100.00\%$ & $100.00\%$ \ $100.00\%$ & $0.00\%$ \\
  	\hline
  	\textsf{AG-1c2b} & $22.7\%$ (25/110) & $87.59\%$ & $64.71\%$ \ $100.00\%$ & $8.80\%$ & $100.0\%$ (2/2) & $100.00\%$ & $100.00\%$ \ $100.00\%$ & $0.00\%$ \\
  	\hline
  \end{tabular}
  \caption{The performance of our heuristics in comparison to the exact algorithm on each type of graphs.
  }
  \label{tab:heuristic_performance_separate}
  \end{table}

\clearpage

\begin{figure}[tb]
    \centering
    \begin{subfigure}{.8\columnwidth}
        \centering
        \includegraphics[page=1,width=\columnwidth]{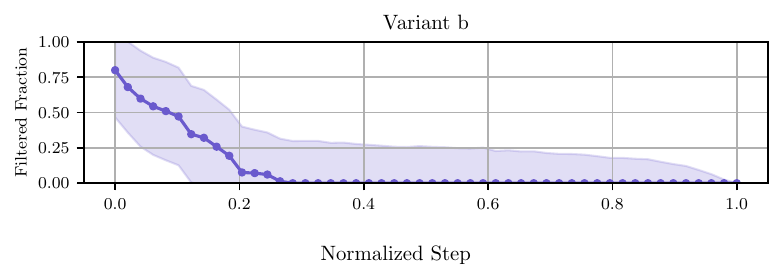}
        \subcaption{}
        \label{fig:candidate_set}
    \end{subfigure}
    \hfil
    \begin{subfigure}{.8\columnwidth}
        \centering
        \includegraphics[page=1,width=\columnwidth]{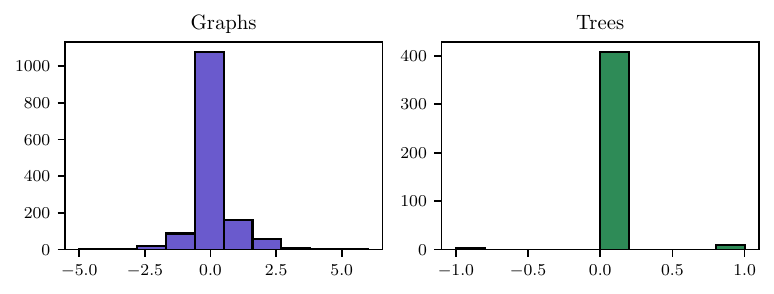}
        \subcaption{}
        \label{fig:histograms_differences}
    \end{subfigure}
    \hfil
    \caption{(a) The median of the fractions of nodes filtered by Variant~\textsf{2b} per normalized algorithm iteration. Blue curve: Median of the percentages of nodes filtered at each step. Shaded region: Standard deviation, indicating variability across graphs. (b) Histograms for the values of~$D_G$ and~$D_T$.}\label{fi:variants2a2b}
\end{figure}

\subparagraph{Differences between Variants \textsf{2a} and \textsf{2b}.} To better understand the behavior of the different variants for executing \textsf{Phase~2}, 
we investigated two aspects: (1) whether tie-breaking actually occurs—i.e., whether the set $E'$ of admissible future edges with minimum current degree often contains multiple candidates; and (2) if so, whether applying the additional tie-breaking rule of Variant~\textsf{2b} is more effective than selecting an edge from $E'$ uniformly at random. As shown in \cref{fig:candidate_set}, especially during the initial iterations of the greedy algorithm, Variant~\textsf{2b} consistently reduces the size of $E'$—i.e., the refined subset $E'' \subseteq E'$ (as defined in \cref{se:heuristic}) is noticeably smaller. Having established that tie-breaking situations do arise, we then examined whether Variant~\textsf{2b} empirically produces solutions equivalent to those of Variant~\textsf{2a}.

For each graph $G_i$, let $\mu^{b}(G_i)$ and $\mu^{a}(G_i)$ be the values of the solution computed by Variant~\textsf{2b} and Variant~\textsf{2a}, respectively. Also, denote by $D(G_i)=\mu^{b}(G_i)-\mu^{a}(G_i)$. 
Because our greedy heuristic showed a better performance for trees, we split the analysis into two groups: the set $T$ of graphs $G_i$ that are trees (i.e. caterpillar graphs and general trees), and the set $G$ of graphs $G_i$ that are not trees.
Also, let $D_G=\{D(G_i) \; | \; G_i \in T\}$ and $D_G=\{D(G_i) \; | \; G_i \in G\}$. Because a large amount of our observed $D(G_i)$ values were exactly zero (creating a heavy spike at zero, see \cref{tab:zeros} and Figure \ref{fig:histograms_differences}), the distributions of the differences are non‐normal.  We therefore used a non-parametric bootstrap to estimate a $95\%$ confidence interval (CI) for the medians of $D_T$ and $D_G$.  Specifically:
\begin{enumerate}
	\item For each difference‐set (i.e. $D_T$ and $D_G$), create $B=10,000$ bootstrap samples by sampling with replacement from the original data.
	\item Compute the median of each bootstrap sample.
	\item Take the $2.5$th and $97.5$th percentiles of these $10,000$ medians to form a $95\%$ CI.
\end{enumerate}

In both cases, the $95 \%$ bootstrap CI for the median collapsed to $[0.00, 0.00]$. We 
conclude that Variant~\textsf{2b} does not have a statistically or practically significant advantage over breaking ties at random~--~regardless of whether the input is a tree or not.

\begin{table}[h!]
	\centering
	\begin{tabular}{|l ||r| |r|}
		\hline
	     &	Non-trees, $D_G$ & Trees, $D_T$\\
		                 	\hline
		Sample size                    & 1449 &  420 \\
		\hline
		Count of zero differences      & 1103 &408   \\
		\hline
		Proportion of zeros            & 0.76   & 0.97  \\
		\hline
	\end{tabular}
	\caption{Empirical equivalence test: The median difference between Variants~\textsf{2b} and~\textsf{2a}.}
	\label{tab:zeros}
\end{table}

\end{document}